\documentclass[journal]{IEEEtran}
\usepackage{amsmath,amsfonts,amsthm}
\usepackage[ruled,commentsnumbered,linesnumbered]{algorithm2e}
\usepackage{array}
\usepackage{textcomp}
\usepackage[dvipsnames]{xcolor}
\usepackage{stfloats}
\ifCLASSOPTIONcompsoc
\usepackage[caption=false,font=normalsize,labelfont=sf,textfont=sf]{subfig}
\else
\usepackage[caption=false,font=footnotesize]{subfig}
\fi
\usepackage{url}
\usepackage{verbatim}
\usepackage{graphicx}
\usepackage{cite}
\usepackage[colorlinks,linkcolor=BrickRed,citecolor=Aquamarine,urlcolor=cyan]{hyperref} % blue
\hypersetup{
  colorlinks=true,
  linkcolor=blue,
  citecolor=blue,
  urlcolor=blue
}
\usepackage{makecell}
\usepackage{multirow}
\usepackage{color}
\usepackage{booktabs}
\usepackage{amsmath}

\allowdisplaybreaks[4]
\usepackage{amssymb}
\usepackage{threeparttable}

\theoremstyle{remark}
\newtheorem{remark}{Remark}
\newtheorem{proposition}{Proposition}
\newtheorem{lemma}{Lemma}

\usepackage{orcidlink} 

\begin{document}

\title{Achievable-Rate Analysis of MISO Systems with Transmit-Side Multiport Matching Networks}
% \title{A Generalization of Achievable Rate for MISO Systems with Multiport Matching Networks}
\author{Wendong Cheng, Li Chen and Weidong Wang
        % <-this % stops a space
% \thanks{This work was supported by the Natural Science Foundation of China (Grant No. 62522126) and Anhui Provincial Natural Science Foundation (No. 2308085J24). \emph{(Corresponding author: Li Chen.)}}
\thanks{Wendong Cheng, Li Chen and Weidong Wang are with the CAS Key Laboratory of Wireless Optical Communication, University of Science and Technology of China (USTC), Hefei 230027, China (e-mail: cwd01@mail.ustc.edu.cn; chenli87@ustc.edu.cn; wdwang@ustc.edu.cn).}
}

%\markboth{Journal of \LaTeX\ Class Files,~Vol.~14, No.~8, August~2021}%
%{Shell \MakeLowercase{\textit{et al.}}: A Sample Article Using IEEEtran.cls for IEEE Journals}

%\IEEEpubid{0000--0000/00\$00.00~\copyright~2021 IEEE}
% Remember, if you use this you must call \IEEEpubidadjcol in the second
% column for its text to clear the IEEEpubid mark.

\maketitle
\begin{abstract} % because of their capability for joint decoupling and matching
Characterizing communication performance under the physical constraints imposed by radio frequency front-end circuits is essential for bridging communication-theoretic analysis and practical circuit design. In this work, we investigate the achievable-rate upper bound of a multiple-input single-output (MISO) system with a transmitter-side interconnected multiport matching network (MMN) under multiport Bode--Fano constraints and develop a rate-oriented MMN circuit realization method. First, based on a circuit-theoretic communication model and the concept of directional gain from multivariable control theory, we reveal how the directional characteristics of MMN power transmission interact with wireless propagation to affect communication performance. Then, building on this directional-gain interpretation, we reformulate the matrix-valued functional optimization problem that characterizes the achievable-rate upper bound and derive the optimal transmission-coefficient structure. Further, a greedy constraint-wise repair method is developed to obtain a feasible suboptimal solution. Finally, through a rate-oriented MMN circuit realization method, we demonstrate that the derived theoretical insights provide effective guidance for practical MMN design. Numerical results validate the theoretical analysis of the achievable-rate upper bound and the effectiveness of the proposed MMN circuit realization method.
% Impedance matching network design plays a crucial role in multi-antenna systems. To characterize the mechanism by which transmit-side interconnected MMNs affect the communication performance of digital beamforming multiple-input single-output (DBF-MISO) systems, in this paper, we propose an analytical framework based on circuit theory and multivariable control theory. First, an equivalent channel model incorporating the MMN is established. Then, leveraging the concept of directional gain, we reveal that the equivalent channel gain is governed jointly by the overall power transmission and the alignment between the propagation vector and the dominant transmission subspace. Furthermore, we derive two physically interpretable necessary conditions for channel gain maximization. Numerical results validate the effectiveness of the proposed analytical framework.
\end{abstract}

\begin{IEEEkeywords}
Achievable rate, Bode--Fano constraints, circuit theory of communications, multiport matching networks. %
\end{IEEEkeywords}

\section{Introduction}\label{section_1}
\IEEEPARstart {I}{mpedance} matching is a key consideration in antenna systems \cite{cui2023absorptive} and radio frequency (RF) front-end design \cite{cheng2024multimode}. Impedance matching networks determine how efficiently power is transferred from the RF chains to the antennas. Impedance mismatch degrades the realized array gain \cite{soltani2024wide} and can further affect system-level communication performance, including the achievable data rate \cite{lau2006impact}. Therefore, achieving impedance matching in practical communication systems is highly desirable and has attracted sustained attention from both the RF and communications communities.

Classical RF impedance matching was initially studied within a single-source single-load framework, in which maximum power transfer is achieved under conjugate-matching condition \cite{abdallah2016maximum}. Based on this principle, the authors in \cite{putra2015impedance} designed an L-network for a monopole antenna to minimize the voltage standing wave ratio, while the authors in \cite{choi2020adaptive} used a half-wavelength transformer to reduce the antenna reflection coefficient. Since minimizing the magnitude of the reflection coefficient does not necessarily maximize the power delivered to the antenna when the matching network is lossy, the authors in \cite{rahola2011optimization} adopted transducer power gain as a more direct metric for power-transfer performance.

With the development of multi-antenna systems, impedance matching has extended from single-source single-load power transfer to multi-source multi-load power transfer. Under this multiport framework, maximum power transfer is achieved when the load impedance matrix equals the Hermitian transpose of the source impedance matrix \cite{desoer1973maximum}. In antenna arrays, mutual coupling makes the port impedances interdependent \cite{li2021decoupling}, which in turn requires joint shaping of the load impedance matrix and increases the complexity of the matching network. For coupled antennas, the authors in \cite{broyde2014some} showed that independent single-port matching lacks full tuning capability, whereas interconnected multiport matching networks (MMNs) can realize multiport conjugate matching. Along this direction, the authors in \cite{nie2014systematic} developed a passive lossless MMN realization method that achieves source-side zero reflection with minimum circuit complexity at the design frequency. To balance power-transfer performance and implementation complexity, the authors in \cite{shen2015impedance} further proposed sparse and ladder-type multiport matching structures for compact antenna arrays. Overall, these RF studies investigate impedance matching in multi-antenna systems from the perspective of efficient power transfer from the RF chains to the antenna ports. %To maximize the average received power of antenna arrays, the authors in \cite{shen2015impedance} developed sparse optimal and lower-complexity ladder-type multiport matching methods. % Mutual coupling makes the antenna port impedances interdependent and significantly complicates the matching problem.

In addition to power-transfer efficiency, several studies have investigated impedance matching using communication performance metrics, such as the end-to-end signal-to-noise ratio (SNR) \cite{sarabandi2022bandwidth} and achievable rate. These metrics depend jointly on the matching network, the propagation channel, the array scattering characteristics \cite{li2022study}, and the beamforming or combining strategy. Circuit-theoretic communication models \cite{ivrlavc2010toward, wallace2004mutual} provide a natural framework for incorporating antenna and matching-network effects into communication performance analysis. For compact receive arrays, the authors in \cite{fei2008optimal} derived the optimal single-port matching impedance for maximizing the capacity of a narrowband \(2\times2\) multiple-input multiple-output (MIMO) system. For wideband single-input multiple-output (SIMO) and multiple-input single-output (MISO) systems, the authors in \cite{saab2021optimizing} optimized the parameters of fixed per-antenna single-port matching networks (SMNs) to maximize the achievable rate. For broadband MIMO receivers employing orthogonal frequency-division multiplexing, the authors in \cite{kundu2017enhancing} employed a Butler matrix for array decoupling and optimized per-port T-networks to maximize capacity. In summary, these studies focus on improving communication performance through impedance matching under specific matching-network structures.

Recently, several works have moved beyond specific matching-network topologies to investigate the fundamental achievable-rate limits imposed by physically realizable impedance matching networks. Bode--Fano broadband matching theory \cite{fano1950theoretical} characterizes the fundamental tradeoff between matching performance and bandwidth for passive matching networks. Based on this limitations, the authors in \cite{deshpande2023achievable} derived the achievable-rate upper bound for single-input single-output (SISO) systems with a transmit-side SMN and developed a practical SMN parameter design method to approach this upper bound. This framework was further extended in \cite{deshpande2024generalization} to single-RF-chain MISO systems under the same matching architecture, where the achievable-rate upper bound was derived and the received SNR was shown to be proportional to the overall power transfer efficiency of the transmit-side SMN. %For transmit-side single-port matching, the authors in \cite{deshpande2023achievable} incorporated the Bode--Fano broadband matching constraint to characterize the achievable-rate upper bound of SISO systems. Furthermore, for MISO systems with a single RF chain and a transmit-side single-port matching network, the authors in \cite{deshpande2024generalization} incorporated the Bode--Fano broadband matching constraint to characterize the achievable-rate upper bound of MISO systems and demonstrated that the SNR is proportional to the overall power transfer efficiency of the transmit-side two-port matching network. %Furthermore, for MISO systems with a single RF chain, the authors in \cite{deshpande2024generalization} showed that the transmit-side single-port matching network affects the SNR through its transmission coefficient and optimized the transmission response to maximize the achievable rate. 

% Although the above studies introduce communication metrics into matching network design for multi-antenna systems, the corresponding optimization frameworks still fundamentally rely on single-source single-load matching.
However, existing achievable-rate upper-bound analyses are limited to single-source single-load matching architectures. Modern multi-antenna transmitters commonly employ multiple active RF chains \cite{barneto2022beamformer} to support flexible beamforming and spatial transmission, resulting in a multi-source multi-load configuration. In this configuration, interconnected MMNs provide the full tuning capability required for multiport matching. From a communication-theoretic perspective, characterizing the achievable-rate upper bound for multi-antenna systems with transmit-side interconnected MMNs under multiport Bode--Fano constraints \cite{nie2016bandwidth} remains an open issue. 

The challenges are twofold. First, characterizing the achievable-rate upper bound requires a clear understanding of how an interconnected MMN affects communication performance. Unlike an SMN, an MMN provides different power-transfer efficiencies for different excitation directions. Meanwhile, the propagation channel weights these directions differently in forming the received signal. Quantifying how these two directional effects interact to determine the equivalent channel gain constitutes the first challenge. Second, the achievable-rate upper-bound problem under multiport Bode--Fano constraints takes the form of a matrix-valued functional optimization problem, making it challenging to solve. % the mechanism by which interconnected MMNs affect communication performance remains insufficiently understood.Therefore, the RF transmission coefficient alone cannot fully characterize the resulting SNR. % Multi-port matching networks need to ensure that the optimal space pointing to the corresponding source-side excitation has no power transmission loss.

% multiple-input single-output (DBF-MISO) systems with transmitter-side interconnected MMNs First, based on a circuit-theoretic communication model, we derive the equivalent channel that incorporates the transmitter-side MMN. Then, by introducing the concepts of input direction and directional gain from multivariable control theory \cite{skogestad2005multivariable}, we reveal that the equivalent channel gain is jointly governed by the overall power transmission of the MMN and the alignment between the propagation vector and its dominant transmission subspace.
To address these challenges, this paper develops a directional-gain-based framework for analyzing and a tractable solution method for characterizing the Bode--Fano-constrained achievable-rate upper bound of MISO systems with transmit-side MMNs. First, based on a circuit-theoretic communication model and the concepts of input direction and directional gain from multivariable control theory \cite{skogestad2005multivariable}, we reveal that the equivalent channel gain is formed by projecting the power-normalized propagation vector onto the input directions of the MMN-induced transfer matrix and weighting these projections by the corresponding directional gains. Building on this interpretation, we reformulate the matrix-valued functional optimization problem as a scalar transmission-allocation problem and develop a greedy constraint-wise repair method to obtain a feasible suboptimal solution. Finally, we develop a rate-oriented circuit realization method that directly optimizes passive multiport circuit parameters under prescribed topologies. The main contributions are summarized as follows.%First, a circuit-theoretic equivalent channel model is established. Then, using directional gain from multivariable control theory, we demonstrate that the equivalent channel gain depends on overall power transmission and the alignment between the propagation vector and the dominant transmission subspace. Furthermore, we provide a physical interpretation of channel gain maximization through MMN design. Numerical results are provided to validate the effectiveness of the proposed analytical framework.
\begin{itemize}
    \item \textbf{Directional-gain characterization of MMN-induced channel shaping.} Based on a circuit-theoretic communication model, we derive the equivalent MISO channel incorporating the transmit-side MMN and represent it as a linear transfer-matrix mapping. Using the concepts of input direction and directional gain, we show that the equivalent channel gain is jointly determined by the overall RF power transmission and the alignment between the power-normalized propagation vector and the dominant transmission subspace of the resulting transfer matrix. This result identifies directional-gain shaping as an additional degree of freedom for enhancing communication performance without increasing the overall RF power transmission.

    \item \textbf{Achievable-rate upper-bound characterization under multiport Bode--Fano constraints.} Building on the above directional-gain insight, we reformulate the original matrix-valued functional optimization problem as a scalar transmission-allocation problem and derive the optimal structure of the transmission coefficient using variational calculus and the Karush-Kuhn-Tucker (KKT) conditions. The resulting analysis reveals that the ideal-matching SNR can be attained without perfect impedance matching and that limited matching resources should be allocated across frequency according to rate benefit and Bode--Fano matching cost. We further develop a greedy constraint-wise repair method to obtain a feasible suboptimal solution.

    \item \textbf{Rate-oriented passive MMN circuit realization.} Since a smaller fitting error to the theoretically optimal MMN response derived under the Bode--Fano constraints does not necessarily yield a higher achievable rate, we formulate a passive MMN circuit realization problem that directly maximizes the achievable rate under a prescribed multiport topology. We then develop a multi-start gradient-based optimization method with structured initializations and analytical gradients to determine the circuit parameters, with each branch admittance parameterized in Foster form. Numerical results show that the optimized circuit responses follow the trend implied by the theoretical optimum, demonstrating the practical relevance of the derived theoretical insights.
\end{itemize}

The remainder of this paper is organized as follows. Section \ref{section_2} derives the equivalent channel and formulates the achievable-rate upper-bound problem under multiport Bode--Fano constraints. In Section \ref{section_3}, we interpret the coupling between multiport power transmission and propagation using directional gain. Section \ref{section_4} derives the achievable-rate upper bound and presents the greedy constraint-wise repair method. The rate-oriented passive MMN circuit realization method is developed in Section \ref{section_5}. Section \ref{section_6} provides numerical results to validate the proposed theoretical analysis and circuit realization method. Finally, Section \ref{section_7} concludes the paper.

\emph{Notation:}
Bold uppercase letters denote matrices and bold lowercase letters denote vectors.
For a matrix $\mathbf A$, $\mathbf A^T$, $\mathbf A^H$, $\mathbf A^*$, $\mathbf A^{-1}$, and $\mathbf A^{-H}$ denote its transpose, Hermitian transpose, complex conjugate, inverse, and inverse Hermitian transpose, respectively.
The notation $[\mathbf A]_{ij}$ denotes the $(i,j)$-th entry of $\mathbf A$.
$\operatorname{tr}(\mathbf A)$ denotes the trace of $\mathbf A$, and
$\operatorname{diag}(x_1,\ldots,x_N)$ denotes the diagonal matrix with
$x_1,\ldots,x_N$ on its main diagonal.
For a vector $\mathbf a$, $\|\mathbf a\|$ denotes its Euclidean norm.
For a scalar $x$, $|x|$ denotes its magnitude.
$\mathbb E[\cdot]$ denotes expectation, and $\Re\{x\}$ denotes the real part of $x$.
The symbols $\mathbb R$ and $\mathbb C$ denote the sets of real and complex numbers, respectively.
$\mathbf I_N$ and $\mathbf 0_{m\times n}$ denote the $N\times N$ identity matrix and the $m\times n$ zero matrix, respectively. For Hermitian matrices $\mathbf A$ and $\mathbf B$, $\mathbf A\succeq\mathbf B$ and $\mathbf A\succ\mathbf B$ indicate that $\mathbf A-\mathbf B$ is positive semidefinite and positive definite, respectively. The operators $[x]_a^b$ and $[x]_+$ are defined as $[x]_a^b=\min\{\max\{x,a\},b\}$ and $[x]_+=\max\{x,0\}$, respectively. $\{i\}_1^N$ is shorthand for $i = \{1, 2, \dots, N \}$.

% Circuit-Theoretic Modeling of DBF-MISO Systems a transmit-side MMN
\section{System Model and Problem Formulation}\label{section_2}
In this section, we first establish a circuit-theoretic model for a MISO system with multiple RF chains. Then, we derive the achievable rate as a function of the transmit-side MMN parameters. Finally, we formulate the achievable-rate maximization problem under multiport Bode--Fano constraints and identify the key challenges in solving it.
\subsection{Circuit-Theoretic Modeling of MISO Systems}\label{section_2_1}
Fig.~\ref{system_model} depicts the circuit-theoretic model of the considered MISO system. On the transmit side, a passive and lossless MMN connects the \(M\) RF chains to the \(M\) mutually coupled transmit antennas. On the receive side, the single receive load is assumed to be perfectly matched.

Root power waves are adopted as the representations of the signals throughout the analysis. At the source-side ports of the MMN, the incident and reflected root power wave vectors are denoted by $\mathbf{a}_{\mathrm{T}}(f)\in \mathbb{C}^{M\times1}$ and $\mathbf{b}_{\mathrm{T}}(f) \in \mathbb{C}^{M\times1}$, respectively, and are defined as
\begin{equation*}%\label{eq:atf_btf}
\mathbf{a}_{\mathrm{T}}(f)
= \frac{\mathbf{v}_{\mathrm{T}}(f)+Z_{0}\mathbf{i}_{\mathrm{T}}(f)}
{2\sqrt{\Re{(Z_{0})}}},
\quad
\mathbf{b}_{\mathrm{T}}(f)
= \frac{\mathbf{v}_{\mathrm{T}}(f)-Z_{0}^{*}\mathbf{i}_{\mathrm{T}}(f)}
{2\sqrt{\Re{(Z_{0})}}},
\end{equation*}
% \begin{subequations}
% \begin{align}
% \mathbf{a}_{\mathrm{T}}(f) = \frac{\mathbf{v}_{\mathrm{T}}(f)+Z_{0}\mathbf{i}_{\mathrm{T}}(f)}{2\sqrt{\Re{(Z_{0})}}}, \label{eq:atf} \\
% \mathbf{b}_{\mathrm{T}}(f) = \frac{\mathbf{v}_{\mathrm{T}}(f)-Z_{0}^{*}\mathbf{i}_{\mathrm{T}}(f)}{2\sqrt{\Re{(Z_{0})}}}, \label{eq:btf}
% \end{align}
% \end{subequations}
where $Z_{0}$ denotes the reference impedance, $\mathbf{v}_{\mathrm{T}}(f) \in \mathbb{C}^M$ and $\mathbf{i}_{\mathrm{T}}(f) \in \mathbb{C}^M$ denote the windowed
Fourier transforms of the random voltage vector $\mathbf{v}_{\mathrm{T}}(t) \in \mathbb{C}^{M\times1}$ and the current vector $\mathbf{i}_{\mathrm{T}}(t) \in \mathbb{C}^{M\times1}$ at the source-side ports of the MMN, respectively. Accordingly, the total transmit power spectral density at the source-side ports of the MMN is given by
\begin{equation}\label{eq:PT}
    P_\mathrm{T}(f)=\operatorname{tr}\left( \mathbf{P}_{\mathbf{a}_{\mathrm{T}}}(f) \right)=\lim_{T_0\rightarrow \infty} \frac{1}{T_{0}}\mathbb{E}\left[\| \mathbf{a}_{\mathrm{T}}(f) \|^2 \right],
\end{equation}
where
\begin{equation*}
    \mathbf{P}_{\mathbf{a}_{\mathrm{T}}}(f) = \lim_{T_0\rightarrow \infty} \frac{1}{T_{0}}\mathbb{E}\left[\mathbf{a}_{\mathrm{T}}(f) \mathbf{a}_{\mathrm{T}}^H(f) \right]
\end{equation*}
is the power spectral density matrix of the incident root power wave vector, and \(T_0\) denotes the observation interval of the windowed Fourier transform \cite{deshpande2024generalization}.

\begin{figure}
\centering
\includegraphics[scale=0.65]{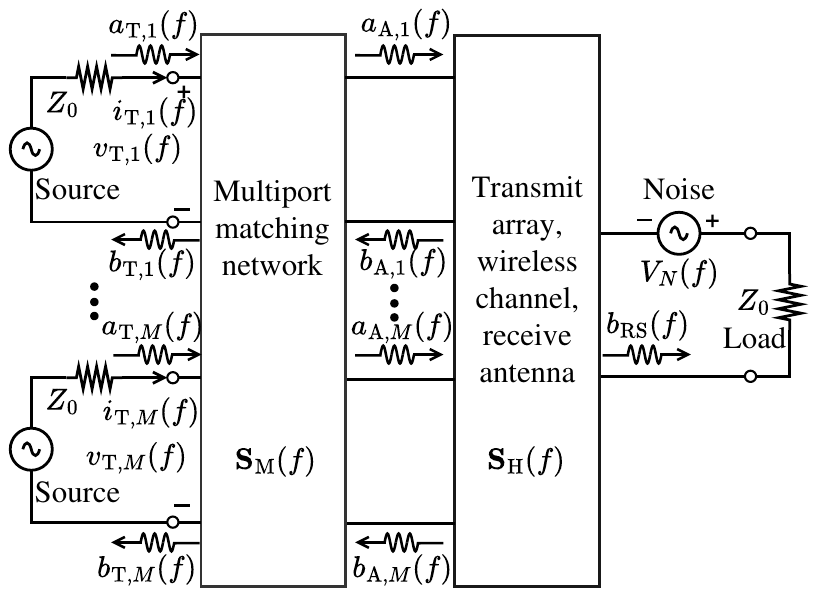}
\caption{Circuit-theoretic model of the considered MISO system with a transmit-side interconnected MMN.}\label{system_model} %图片标题
\end{figure}

At the receive load, the desired signal is represented by the scalar root power wave $b_{\mathrm{RS}}(f)$. The corresponding power spectral density is defined as
\begin{equation}\label{eq:PRS}
    P_{\mathrm{RS}}(f) = \lim_{T_0\rightarrow \infty} \frac{1}{T_{0}}\mathbb{E}\left[ \left |b_{\mathrm{RS}}(f)\right|^2 \right].
\end{equation}
Since the receive load is perfectly impedance matched, no root power wave is reflected. The background radiation noise is modeled by an equivalent noise voltage source $V_{N}(f)$, whose associated noise root power wave has power spectral density $N_{0}=k_{B}T_{n}$, where $k_{B}$ is the Boltzmann constant and $T_{n}$ is the equivalent noise temperature. Accordingly, the SNR is given by
\begin{equation}\label{eq:SNR}
    \mathrm{SNR}(f) = \frac{P_{\mathrm{RS}}(f)}{N_{0}}.
\end{equation}

\subsection{Equivalent Channel and the Corresponding SNR}\label{section_2_2}
Based on the circuit-theoretic model established above, we next incorporate the transmit-side MMN into the equivalent-channel model and derive the corresponding SNR. The transmit-side MMN is characterized by the \(2M\times 2M\) scattering matrix
\begin{equation}\label{eq:SM}
    \mathbf{S}_{\mathrm{M}}(f) = \begin{bmatrix}
    \mathbf{S}_\mathrm{M,11}(f) & \mathbf{S}_\mathrm{M,12}(f) \\
    \mathbf{S}_\mathrm{M,21}(f) & \mathbf{S}_\mathrm{M,22}(f)
    \end{bmatrix}.
\end{equation}
According to multiport network theory, the root power wave vectors at the source-side and antenna-side ports of the MMN satisfy
%the root power wave vectors at the source-side ports and the antenna-side ports satisfy
\begin{equation}\label{eq:relation_MN}
\begin{bmatrix}
\mathbf b_{\mathrm T}^{T}(f), \, \mathbf a_{\mathrm A}^{T}(f)
\end{bmatrix}^{T}
=
\mathbf S_{\mathrm M}(f)
\begin{bmatrix}
\mathbf a_{\mathrm T}^{T}(f),\, \mathbf b_{\mathrm A}^{T}(f)
\end{bmatrix}^{T}.
\end{equation}
where $\mathbf{a}_{\mathrm A}(f)\in\mathbb C^{M\times1}$ and
$\mathbf{b}_{\mathrm A}(f)\in\mathbb C^{M\times1}$ denote the root power wave vectors incident on and reflected from the transmit antenna array, respectively.

The mutually coupled transmit antenna array, the wireless propagation channel, and the receive antenna are modeled as an equivalent $(M+1)$-port network with scattering matrix
\begin{equation}\label{eq:SH}
\mathbf{S}_\mathrm{H}(f) =
\begin{bmatrix}
\mathbf{S}_\mathrm{T}(f) & \mathbf{s}_\mathrm{TR}(f) \\
\mathbf{s}_\mathrm{RT}^T(f) & S_\mathrm{R}(f)
\end{bmatrix}
\stackrel{(a)}{\approx}
\begin{bmatrix}
\mathbf{S}_\mathrm{T}(f) & \mathbf{0}_{M\times 1} \\
\mathbf{s}_\mathrm{RT}^T(f) & S_\mathrm{R}(f)
\end{bmatrix}.
\end{equation}
where $\mathbf{S}_{\mathrm{T}}(f) \in \mathbb{C}^{M\times M}$ denotes the scattering matrix of the transmit antenna array, $S_{\mathrm{R}}(f)\in \mathbb{C}$ denotes the scattering parameter of the receive antenna, and $\mathbf{s}_{\mathrm{RT}}(f)\in \mathbb{C}^{M\times1}$ denotes the scattering parameters of the wireless propagation channel. The vector \(\mathbf{s}_{\mathrm{RT}}(f)\) incorporates the effects of antenna gain and frequency-selective fading. In step (a), we adopt the unilateral approximation \cite{ivrlavc2010toward} by neglecting the reverse coupling from the receiver to the transmitter, i.e., $\mathbf{s}_{\mathrm{TR}}(f)=\mathbf{0}_{M\times1}$. This approximation is justified when the transmit and receive antennas are sufficiently separated. With the receive load perfectly matched, the corresponding root power wave satisfy
\begin{equation}\label{eq:relation_TR}
\begin{bmatrix} \mathbf{b}_{\mathrm A}^{T}(f), \, b_{\mathrm{RS}}(f) \end{bmatrix}^T = \mathbf S_\mathrm H(f) \begin{bmatrix} \mathbf a_{\mathrm A}^T(f), \, 0 \end{bmatrix}^T. 
\end{equation}

By combining (\ref{eq:relation_MN}), (\ref{eq:SH}), and (\ref{eq:relation_TR}), \(\mathbf a_{\mathrm A}(f)\) can be expressed in terms of \(\mathbf a_{\mathrm T}(f)\) as
\begin{equation}\label{eq:relation_aA_aT}
\mathbf a_{\mathrm A}(f)
=
\left(\mathbf I_M-\mathbf S_{\mathrm{M},22}(f)\mathbf S_{\mathrm T}(f)\right)^{-1}
\mathbf S_{\mathrm{M},21}(f)\mathbf a_{\mathrm T}(f).
\end{equation}
Substituting (\ref{eq:relation_aA_aT}) into (\ref{eq:relation_TR}) yields $b_{\mathrm{RS}}(f)=\mathbf h_{\mathrm{MISO}}^{H}(f)\mathbf a_{\mathrm T}(f)$,
where
\begin{equation}\label{eq:hMISO}
\mathbf h_{\mathrm{MISO}}(f) \!
=\!
\mathbf S_{\mathrm{M},21}^{H}(f) \!
\left(\mathbf I_M \!- \! \mathbf S_{\mathrm{M},22}(f)\mathbf S_{\mathrm T}(f)\right)^{-H}\!
\mathbf s_{\mathrm{RT}}^{*}(f)
\end{equation}
is the equivalent MISO channel vector relating incident root power wave vector $\mathbf a_{\mathrm T}(f)$ at the source-side ports of the MMN to the received root power wave. It captures not only the wireless propagation channel, but also the effects of frequency-selective antenna mutual coupling and the transmit-side MMN.

For a given total transmit power spectral density \(P_{\mathrm T}(f)\), the received signal power spectral density under maximum-ratio transmission (MRT) is given by
\begin{equation}\label{eq:PRS3}
P_{\mathrm{RS}}(f)=\|\mathbf{h}_{\mathrm{MISO}}(f)\|^2P_{\mathrm{T}}(f).
\end{equation}
Substituting (\ref{eq:PRS3}) into (\ref{eq:SNR}) gives the received SNR in terms of equivalent channel gain $\|\mathbf{h}_{\mathrm{MISO}}\|^2$ as
\begin{equation}\label{eq:SNR2}
    \mathrm{SNR}_{\mathrm{MISO}}(f)=\frac{\|\mathbf{h}_{\mathrm{MISO}}(f)\|^2P_{\mathrm{T}}(f)}{N_{0}}.
\end{equation}
Finally, the achievable rate over the
frequency band $[f_{\min},f_{\max}]$ is given by
\begin{equation}\label{eq:MISO_rate}
R_{\mathrm{MISO}}
=
\int_{f_{\min}}^{f_{\max}}
\log_2\left(
1+\mathrm{SNR}_{\mathrm{MISO}}(f)
\right)df .
\end{equation}

\subsection{Problem Formulation}\label{section_2_3}
In this work, we aim to characterize the achievable-rate upper bound for the considered MISO system imposed by the physical realizability constraints on the transmit-side MMN. To this end, we incorporate the multiport Bode--Fano constraints into the achievable-rate expression derived above and formulate the following optimization problem:
\begin{subequations}
\begin{align}
(\mathcal{P}_0):&
    \max_{\mathbf{S}_{\mathrm{M}}(f)} \;
    \int_{f_{\min}}^{f_{\max}}
    \log_2\left(
    1+\frac{\|\mathbf{h}_{\mathrm{MISO}}(f)\|^2P_{\mathrm{T}}(f)}{N_{0}}
    \right)\,df
    \label{eq:P0_obj}
    \\
    \mathrm{s.t.}\;
    &
    \int_0^\infty
    \xi_i(f)
    \log\left(\frac{1}{1-\mathcal{T}(f)}\right)
    df
    \leq
    B_{\mathrm{BF},i}, \quad \{i\}_1^{N_{\mathrm{BF}}},
    \label{eq:P0_BF}
    \\
    &
    0\leq \mathcal{T}(f)\leq 1.
    \label{eq:P0_box}
\end{align}
\end{subequations}

In problem $(\mathcal{P}_{0})$, the optimization variable is the scattering matrix $\mathbf S_{\mathrm M}(f)$ of the transmit-side MMN. The \(N_{\mathrm{BF}}\) inequalities in \eqref{eq:P0_BF} are the multiport Bode--Fano constraints, which characterize the fundamental tradeoff between power transmission efficiency and bandwidth by imposing weighted integral bounds on the logarithmic function of the transmission coefficient $\mathcal T(f)$. It indicates that \(\mathcal T(f)\) is considered as a finite matching resource. The transmission coefficient \(\mathcal{T}(f) \!\) is defined as the ratio of the expected net power accepted by the transmit antenna array to the expected incident power at the source-side ports of the MMN, i.e., % and characterizes the fundamental wideband limitation of physically realizable passive matching networks
\begin{equation}\label{eq:transmission_coef_definition}
    \mathcal{T}(f)
    =
    \frac{
    \mathbb{E}\left[
    \|\mathbf a_{\mathrm A}(f)\|^2
    -
    \|\mathbf b_{\mathrm A}(f)\|^2
    \right]
    }{
    \mathbb{E}\left[
    \|\mathbf a_{\mathrm T}(f)\|^2
    \right]
    } \in\![0,1].
\end{equation}
Under the isotropic source excitation assumption adopted in the multiport Bode-Fano constraint, $\mathcal{T}(f)$ can be expressed as
\begin{equation}\label{eq:transmission_coef_MISO}
\begin{aligned}
\mathcal{T}(f)
= \frac{1}{M}\operatorname{tr}\Big\{&
\mathbf{S}_{\mathrm{M},21}^{H}(f)
\big( \mathbf{I}_M-\mathbf{S}_{\mathrm{M},22}(f)\mathbf{S}_{\mathrm{T}}(f) \big)^{-H}\mathbf{W}(f) \\
&
\big( \mathbf{I}_M-\mathbf{S}_{\mathrm{M},22}(f)\mathbf{S}_{\mathrm{T}}(f) \big)^{-1} \mathbf{S}_{\mathrm{M},21}(f)
\Big\}.
\end{aligned}
\end{equation}
where $\mathbf{W}(f)=\mathbf{I}_M-\mathbf{S}_{\mathrm{T}}^H(f) \mathbf{S}_{\mathrm{T}}(f)$ denotes the power-acceptance matrix of the transmit antenna array. For the $i$-th constraint, the positive weighting function $\xi_i(f)$ and the bound $B_{\mathrm{BF},i}$ are determined by the transmit antenna array and can be computed following the procedure in~\cite{nie2016bandwidth2}.

Problem $(\mathcal P_0)$ is generally nonconvex and difficult to solve directly. The different blocks of the MMN scattering matrix are jointly coupled through matrix products and matrix inversion, with the resulting mappings further entering nonlinear composite functions in both the objective and constraints. Moreover, the optimization variable is a complex matrix-valued function over a continuous frequency range, and the integral Bode--Fano constraints couple the MMN responses across frequency. Consequently, problem \((\mathcal P_0)\) constitutes a challenging infinite-dimensional nonconvex functional optimization problem.

\subsection{Coupling Between Multiport Power Transmission and Propagation}
Beyond the challenges involved in solving problem \((\mathcal P_0)\), this formulation also exposes a fundamental issue in the communication-theoretic analysis of MMNs. The multiport Bode--Fano constraints are imposed on the transmission coefficient $\mathcal T(f)$, whereas the achievable-rate objective depends on the equivalent channel gain $\|\mathbf h_{\mathrm{MISO}}(f)\|^2$. The relationship between these two quantities is not explicit in problem \((\mathcal P_0)\). Clarifying this relationship is important for two reasons. First, it provides physical insight into how the power-transmission characteristics of an MMN translate into communication performance. Second, it may enable the matrix-valued optimization problem $(\mathcal P_0)$ to be reformulated in terms of the scalar transmission metric $\mathcal T(f)$, thereby substantially reducing its complexity. 

For transmit-side SMNs, previous studies have established a direct relationship between the SNR and the transmission coefficient \cite{deshpande2023achievable,deshpande2024generalization}. Specifically, the SNR can be expressed as
\begin{equation}
\mathrm{SNR}(f)
=
\mathrm{SNR}_{\mathrm{ideal}}(f)\mathcal T(f),
\end{equation}
where $\mathrm{SNR}_{\mathrm{ideal}}(f)$ is determined by the propagation channel and the remaining front-end characteristics excluding the SMN. Therefore, in the single-port case, the received SNR scales directly with the power transmission efficiency of the SMN. This direct proportionality makes the effect of the SMN on communication performance physically transparent and greatly simplifies the corresponding rate optimization. It stems from the scalar nature of the incident root power wave in an SMN, which leaves no directional degree of freedom. 

In contrast, the source-side incident root power wave of an MMN is vector-valued, introducing the excitation direction as an additional degree of freedom. For a given source-side excitation, the MMN shapes not only the total power delivered to the antenna array but also the complex amplitude and phase distribution across the antenna ports. The resulting antenna-port excitation is then weighted by the propagation channel in forming the received signal. 

Consequently, unlike the scalar factorization in the SMN case, the power-transmission behavior of the MMN and the propagation channel are inherently coupled in the equivalent channel gain, as reflected in problem \((\mathcal P_0)\). This coupling obscures the physical mechanism by which the MMN shapes the equivalent channel and prevents a direct reformulation of problem \((\mathcal P_0)\) in terms of the scalar transmission coefficient \(\mathcal T(f)\). Therefore, revealing this coupling is essential for both understanding the communication role of MMNs and establishing a tractable basis for simplifying problem $(\mathcal P_0)$.

\section{Interpretation of the Coupling Between Power Transmission and Propagation}\label{section_3}
To elucidate the coupling identified above, we first reformulate the equivalent MISO channel as a linear MIMO mapping induced by the MMN. Then, we utilize the directional gains of the resulting transfer matrix to establish an explicit relationship between the transmission coefficient and the equivalent channel gain.

\subsection{Transfer-Matrix Representation of the Equivalent Channel}\label{section_3_1}
Before analyzing the directional coupling between multiport power transmission and propagation, we rewrite the equivalent MISO channel in \eqref{eq:hMISO} as
\begin{equation} \label{eq:hMISO_system}
    \mathbf{h}_\mathrm{MISO}(f)= \mathbf{F}(f)\tilde{\mathbf{s}}_\mathrm{RT}(f),
\end{equation}
where 
\begin{align}
\mathbf{F}(f) &= \mathbf{S}_\mathrm{M,21}^H(f)\left(\mathbf{I}-\mathbf{S}_{\mathrm{M},22}(f)\mathbf{S}_{\mathrm{T}}(f)\right)^{-H}\mathbf{W}^{\frac{1}{2}}(f), \label{eq:F_matrix} \\
\tilde{\mathbf{s}}_\mathrm{RT}(f) &= \mathbf{W}^{-\frac{1}{2}}(f)\mathbf{s}_\mathrm{RT}^{*}(f). \label{eq:tilde_sRT}
\end{align}
The transmit antenna array is assumed to be strictly passive, ensuring that $\mathbf W(f)\succ\mathbf 0$ and that $\mathbf W^{1/2}(f)$ and $\mathbf W^{-1/2}(f)$ are well defined. Under the reformulation in (\ref{eq:hMISO_system}), $\mathbf{h}_\mathrm{MISO}(f)$ can be viewed as the output of a linear MIMO mapping with transfer matrix $\mathbf{F}(f)$ driven by the power-normalized propagation vector $\tilde{\mathbf{s}}_\mathrm{RT}(f)$. 

Compared with the original expression of $\mathbf{h}_\mathrm{MISO}(f)$ in (\ref{eq:hMISO}), the representation in (\ref{eq:hMISO_system}) provides two main advantages:
\begin{itemize}
    \item \textit{Passive directional gains.} The matrix $\mathbf F(f)$ is a contraction under the Euclidean norm. Hence,
    all singular values of $\mathbf F(f)$ are bounded by one and can be
    interpreted as physically admissible directional gains. This property reflects the passivity of the transmit-side MMN and the mutually coupled transmit antenna array.
    \item \textit{Connection to perfect matching.} The equality condition of the contraction property is directly related to complete source-side decoupling and matching. Specifically, it implies that the source-side input reflection matrix $\mathbf S_{\mathrm{in}}(f)$ becomes zero. In this case, $\|\mathbf{h}_{\mathrm{MISO}}(f)\|^2$ reaches its upper bound $\|\tilde{\mathbf{s}}_\mathrm{RT}(f)\|^2$, which corresponds to the equivalent channel gain under perfect impedance matching.
\end{itemize}

The properties supporting the above interpretation are summarized in the following proposition.
\begin{proposition}[\textit{Properties of the Transfer Matrix} $\mathbf{F}(f)$]\label{proposition1}
For each frequency \(f\), the transfer matrix \(\mathbf F(f)\) satisfies $\|\mathbf{F}(f)\mathbf{x}(f)\| \le \|\mathbf{x}(f)\|$ for all $\mathbf{x}(f)$, or equivalently $\mathbf 0 \preceq \mathbf F^H(f)\mathbf F(f) \preceq \mathbf I_M$. Moreover, the equality $\|\mathbf F(f)\mathbf x(f)\|=\|\mathbf x(f)\|$ for all $\mathbf x(f)$
holds if and only if the source-side input reflection matrix satisfies
$\mathbf S_{\mathrm{in}}(f)=\mathbf 0$.
\end{proposition}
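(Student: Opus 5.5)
The plan is to derive an exact power-balance identity for \(\mathbf F(f)\mathbf F^H(f)\) from the losslessness of the MMN, and then read off both claims of Proposition~\ref{proposition1} from it. Define
\begin{equation*}
\mathbf G(f)=\left(\mathbf I_M-\mathbf S_{\mathrm M,22}(f)\mathbf S_{\mathrm T}(f)\right)^{-1}\mathbf S_{\mathrm M,21}(f),
\end{equation*}
so that \eqref{eq:relation_aA_aT} reads \(\mathbf a_{\mathrm A}=\mathbf G\mathbf a_{\mathrm T}\). By \eqref{eq:F_matrix}, \(\mathbf F=\mathbf G^H\mathbf W^{1/2}\), and therefore
\begin{equation*}
\mathbf F\mathbf F^H=\mathbf G^H\mathbf W\mathbf G .
\end{equation*}
This is exactly the matrix inside the trace in \eqref{eq:transmission_coef_MISO}. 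The source-side input reflection matrix, obtained by eliminating \(\mathbf a_{\mathrm A},\mathbf b_{\mathrm A}\) from \eqref{eq:relation_MN} with \(\mathbf b_{\mathrm A}=\mathbf S_{\mathrm T}\mathbf a_{\mathrm A}\), is
\begin{equation*}
\mathbf S_{\mathrm{in}}=\mathbf S_{\mathrm M,11}+\mathbf S_{\mathrm M,12}\mathbf S_{\mathrm T}\mathbf G ,
\end{equation*}
so that \(\mathbf b_{\mathrm T}=\mathbf S_{\mathrm{in}}\mathbf a_{\mathrm T}\).

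\textbf{Step 1: power balance.} Because the MMN is passive and lossless, \(\mathbf S_{\mathrm M}^H\mathbf S_{\mathrm M}=\mathbf I_{2M}\). Applying this to \eqref{eq:relation_MN} gives, for every excitation,
\begin{equation*}
\|\mathbf b_{\mathrm T}\|^2+\|\mathbf a_{\mathrm A}\|^2=\|\mathbf a_{\mathrm T}\|^2+\|\mathbf b_{\mathrm A}\|^2 .
\end{equation*}
Substituting \(\mathbf b_{\mathrm T}=\mathbf S_{\mathrm{in}}\mathbf a_{\mathrm T}\), \(\mathbf a_{\mathrm A}=\mathbf G\mathbf a_{\mathrm T}\) and \(\mathbf b_{\mathrm A}=\mathbf S_{\mathrm T}\mathbf G\mathbf a_{\mathrm T}\) yields
\begin{equation*}
\mathbf a_{\mathrm T}^H\left(\mathbf I_M-\mathbf S_{\mathrm{in}}^H\mathbf S_{\mathrm{in}}\right)\mathbf a_{\mathrm T}=\mathbf a_{\mathrm T}^H\mathbf G^H\mathbf W\mathbf G\,\mathbf a_{\mathrm T}
\end{equation*}
for all \(\mathbf a_{\mathrm T}\in\mathbb C^M\). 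Both matrices are Hermitian, so polarization turns equality of the quadratic forms into the matrix identity
\begin{equation*}
\mathbf F\mathbf F^H=\mathbf I_M-\mathbf S_{\mathrm{in}}^H\mathbf S_{\mathrm{in}} .
\end{equation*}
This identity is the core of the proof.

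\textbf{Step 2: contraction.} From the identity, \(\mathbf F\mathbf F^H\preceq\mathbf I_M\), and trivially \(\mathbf F\mathbf F^H\succeq\mathbf 0\). Since \(\mathbf F\) is square, \(\mathbf F\mathbf F^H\) and \(\mathbf F^H\mathbf F\) have the same eigenvalues, namely the squared singular values of \(\mathbf F\). Hence \(\mathbf 0\preceq\mathbf F^H\mathbf F\preceq\mathbf I_M\), which is equivalent to \(\|\mathbf F\mathbf x\|\le\|\mathbf x\|\) for all \(\mathbf x\).

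\textbf{Step 3: equality case.} The condition \(\|\mathbf F\mathbf x\|=\|\mathbf x\|\) for all \(\mathbf x\) means \(\mathbf F^H\mathbf F=\mathbf I_M\), i.e., \(\mathbf F\) is unitary. For a square matrix this is equivalent to \(\mathbf F\mathbf F^H=\mathbf I_M\). By the identity, this in turn is equivalent to \(\mathbf S_{\mathrm{in}}^H\mathbf S_{\mathrm{in}}=\mathbf 0\), which holds if and only if \(\mathbf S_{\mathrm{in}}=\mathbf 0\). Every step here is reversible, which gives the stated "if and only if".

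\textbf{Main obstacle.} The delicate point is well-posedness. We must confirm that \(\mathbf I_M-\mathbf S_{\mathrm M,22}\mathbf S_{\mathrm T}\) is invertible; this follows because \(\|\mathbf S_{\mathrm M,22}\|\le1\) (a sub-block of a unitary matrix) and \(\|\mathbf S_{\mathrm T}\|<1\) (strict passivity, \(\mathbf W\succ\mathbf 0\)), so the product has spectral radius below one. We must also check carefully that the elimination leading to \(\mathbf S_{\mathrm{in}}\) is consistent with \eqref{eq:relation_aA_aT}, so that the power balance holds for arbitrary \(\mathbf a_{\mathrm T}\) and not only on average. I would verify this by writing out the second block row of \eqref{eq:relation_MN} explicitly. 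The remaining linear algebra in Steps 2 and 3 is routine.
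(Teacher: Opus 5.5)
Your proposal is correct and follows essentially the same route as the paper. Both derive $\mathbf F\mathbf F^H=\mathbf I_M-\mathbf S_{\mathrm{in}}^H\mathbf S_{\mathrm{in}}$ from the lossless power balance, then read off the contraction and the equality condition using the fact that, for a square matrix, $\mathbf F^H\mathbf F=\mathbf I_M$ is equivalent to $\mathbf F\mathbf F^H=\mathbf I_M$. Your explicit check that $\mathbf I_M-\mathbf S_{\mathrm M,22}\mathbf S_{\mathrm T}$ is invertible, via $\|\mathbf S_{\mathrm M,22}\|\le 1$ and $\|\mathbf S_{\mathrm T}\|<1$, usefully supplies a well-posedness point that the paper takes for granted.
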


\begin{IEEEproof}
Please refer to Appendix \ref{appendix_A}.
\end{IEEEproof}

\subsection{Directional-Gain Characterization}\label{section_3_2}
Applying the singular value decomposition to $\mathbf{F}(f)$ gives
\begin{equation} \label{eq:svd_F}
    \mathbf{F}(f)=\mathbf{U}(f)\mathbf{\Sigma}(f)\mathbf{V}^H(f),
\end{equation}
where \(\mathbf U(f)\) is unitary,
\(\mathbf V(f)=[\mathbf v_1(f),\ldots,\mathbf v_M(f)]\) is the unitary matrix of right singular vectors, and $\mathbf{\Sigma}(f)=\mathrm{diag}\left(\sigma_1\left( f\right), \dots,\sigma_M\left( f\right) \right)$. The singular values are ordered as \(\sigma_1(f)\geq\cdots\geq\sigma_M(f)\). In multivariable control theory, $\mathbf v_i(f)$ represents an input direction of $\mathbf F(f)$, and $\sigma_i(f)$ is the corresponding directional gain, which characterizes the ratio of the output magnitude to the input magnitude along this direction. By \textit{Proposition}~\ref{proposition1},
\(0\leq\sigma_i(f)\leq1\). Substituting (\ref{eq:svd_F}) into (\ref{eq:hMISO_system}) yields
\begin{equation}\label{eq:hMISO_direction_gain}
    \|\mathbf{h}_{\mathrm{MISO}}(f)\|^2=\sum_{i=1}^{M}\sigma_{i}^2(f)|z_{i}(f)|^2, 
\end{equation}
where $z_i(f)$ is the $i$-th entry of 
$\mathbf z(f)=\mathbf V^H(f)\tilde{\mathbf s}_{\mathrm{RT}}(f)$, and represents the projection coefficient of $\tilde{\mathbf s}_{\mathrm{RT}}(f)$ onto the $i$-th input direction of $\mathbf F(f)$. Similarly, the transmission coefficient in (\ref{eq:transmission_coef_MISO}) can be rewritten as
\begin{equation}\label{eq:transmission_direction_gain}
    \mathcal{T}(f)=\frac{\operatorname{tr}\left(\mathbf{F}\left(f\right)\mathbf{F}^H\left(f\right) \right)}{M}=\frac{\sum_{i=1}^{M}\sigma_{i}^2(f)}{M}.
\end{equation}

Equations~(\ref{eq:hMISO_direction_gain}) and (\ref{eq:transmission_direction_gain}) explicitly reveal the directional coupling between multiport power transmission and propagation. This relationship is illustrated in Fig.~\ref{directional_interpretation} using a three-dimensional example. The transmission coefficient $\mathcal T(f)$ is the average squared directional gain of $\mathbf F(f)$, whereas the equivalent channel gain is a propagation-weighted sum of the same squared directional gains, with the weights determined by the projections of $\tilde{\mathbf s}_{\mathrm{RT}}(f)$ onto the corresponding input directions. Therefore, the communication impact of an MMN depends not only on its overall power-transmission capability, but also on the alignment between the power-normalized propagation vector and high-gain input directions of \(\mathbf F(f)\). To quantify this alignment, we define the alignment coefficient as
\begin{equation}
\eta_{\gamma}(f)
=
\frac{\left\|
P_{\mathcal{S}_{\gamma}(f)}
\tilde{\mathbf{s}}_{\mathrm{RT}}(f)
\right\|^2}{\|\tilde{\mathbf{s}}_{\mathrm{RT}}(f)\|^2},
\end{equation}
where $P_{\mathcal S_\gamma(f)}$ denotes the orthogonal projector onto the dominant transmission subspace $\mathcal S_\gamma(f)=\operatorname{span}\{\mathbf v_i(f):\sigma_i^2(f)\geq\gamma\sigma_1^2(f)\}$, and $\gamma\in(0,1]$ is a prescribed threshold. A larger $\eta_{\gamma}(f)$ indicates better alignment of the power-normalized propagation vector with the dominant transmission subspace.

\begin{figure}
\centering
\includegraphics[scale=1.2]{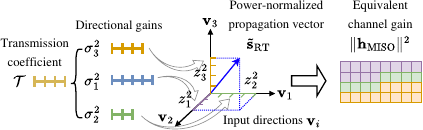}
\caption{Illustration of the directional-gain relationship between the transmission coefficient and the equivalent channel gain in a three-dimensional example.}\label{directional_interpretation} %图片标题
\end{figure}

\begin{remark}[\textit{Insight into Enhancing the Equivalent Channel Gain}]\label{remark1}
The directional-gain characterization shows that the transmission coefficient alone does not uniquely determine the equivalent channel gain. For a given \(\mathcal T(f)\), allocating larger directional gains to input directions that are well aligned with \(\tilde{\mathbf s}_{\mathrm{RT}}(f)\) yields a larger equivalent channel gain than allocating them to weakly aligned directions. Therefore, the equivalent channel gain can be enhanced by appropriately shaping the directional gains of $\mathbf F(f)$ according to the power-normalized propagation vector. This insight provides a basis for reducing the complexity of problem $(\mathcal P_0)$.
\end{remark}

\section{Achievable-Rate Upper Bound Under Multiport Bode--Fano Constraints}\label{section_4}
Motivated by the insight in \textit{Remark}~\ref{remark1}, we derive a tight upper bound on the equivalent channel gain and reformulate the problem $(\mathcal{P}_{0})$ as a scalar transmission-allocation problem. We then characterize the optimal transmission-coefficient structure and develop a greedy constraint-wise repair method to obtain a feasible suboptimal solution.
\subsection{Scalar Reformulation of the Rate Upper-Bound Problem}\label{section_4_1}
Using the transfer-matrix representation introduced in Section~\ref{section_3_1}, problem \((\mathcal P_0)\) can be expressed in terms of \(\mathbf F(f)\) as
\begin{subequations}
\begin{align}
(\mathcal{P}_1):
    &\max_{\mathbf{F}(f)} \;
    \int_{f_{\min}}^{f_{\max}}
    \log_2\left(
    1+\frac{\|\mathbf{F}(f)\tilde{\mathbf{s}}_\mathrm{RT}(f)\|^2P_{\mathrm{T}}(f)}{N_{0}}
    \right)\,df
    \label{eq:P2_obj}
    \\
    \mathrm{s.t.}\;
    &
    \int_0^\infty
    \xi_i(f)
    \log\left(\frac{1}{1-\mathcal{T}(f)}\right)
    df
    \leq
    B_{\mathrm{BF},i},\quad \{i\}_1^{N_{\mathrm{BF}}},
    \label{eq:P2_cons1}
    \\
    &
    \mathcal{T}(f)=\frac{\operatorname{tr}\left(\mathbf{F}\left(f\right)\mathbf{F}^H\left(f\right) \right)}{M},
    \label{eq:P2_cons2}
    \\
    & 
    \mathbf{0} \preceq \mathbf{F}\left(f\right)\mathbf{F}^{H}\left(f\right) \preceq  \mathbf{I}.
    \label{eq:P2_cons3}
\end{align}
\end{subequations}
Although problem \((\mathcal P_1)\) still involves the matrix-valued function \(\mathbf F(f)\), a tight upper bound on the equivalent channel gain can be established in terms of the transmission coefficient \(\mathcal T(f)\), as stated in the following lemma.
\begin{lemma}[\textit{Tight upper bound on the equivalent channel gain}]\label{lemma1}
Let \(\mathbf A \in \mathbb{C}^{M\times M}\) satisfy
\(\mathbf{0} \preceq \mathbf{A}\mathbf{A}^{H} \preceq \mathbf{I}_{M}\), and define $\tau=\frac{\operatorname{tr}\left(\mathbf{A}\mathbf{A}^H\right)}{M}$. Then, for any nonzero vector \(\mathbf u\in\mathbb C^{M\times1}\),
\begin{equation}\label{eq:upper_bound}
\left\|\mathbf A\mathbf{u}\right\|^2
\leq
\left\|\mathbf{u}\right\|^2
\min\{M \tau,1\}.
\end{equation}
The bound is tight for every \(\tau\in[0,1]\). In particular, when
\(0\le \tau\le \frac{1}{M}\), equality is attained by any matrix of the form
\begin{equation}\label{eq:equality_case1}
\mathbf A
=
\sqrt{M\tau}\,
\mathbf q
\frac{\mathbf u^H}{\|\mathbf u\|},
\end{equation}
where \(\mathbf q\in\mathbb C^M\) is an arbitrary unit-norm vector. This rank-one construction assigns the entire sum \(M\tau\) of squared directional gains to the input direction \(\mathbf u\). For \(\frac{1}{M}<\tau\le 1\), an equality-achieving construction is provided in Appendix~\ref{appendix_B}.
\end{lemma}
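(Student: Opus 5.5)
The plan is to prove the two bounds $\|\mathbf A\mathbf u\|^2\le\|\mathbf u\|^2$ and $\|\mathbf A\mathbf u\|^2\le M\tau\|\mathbf u\|^2$ separately. Both come from the singular value decomposition $\mathbf A=\mathbf U\mathbf\Sigma\mathbf V^H$, in the same way as \eqref{eq:hMISO_direction_gain}. Writing $\mathbf z=\mathbf V^H\mathbf u$, we have
\begin{equation*}
\|\mathbf A\mathbf u\|^2=\sum_{i=1}^M\sigma_i^2|z_i|^2,\qquad \sum_{i=1}^M|z_i|^2=\|\mathbf u\|^2,\qquad \sum_{i=1}^M\sigma_i^2=M\tau .
\end{equation*}
The hypothesis $\mathbf A\mathbf A^H\preceq\mathbf I_M$ gives $\sigma_i^2\le 1$. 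This yields $\sum_i\sigma_i^2|z_i|^2\le\sum_i|z_i|^2=\|\mathbf u\|^2$. We also have $|z_i|^2\le\|\mathbf u\|^2$ for each $i$, so $\sum_i\sigma_i^2|z_i|^2\le\|\mathbf u\|^2\sum_i\sigma_i^2=M\tau\|\mathbf u\|^2$. Taking the smaller of the two bounds gives \eqref{eq:upper_bound}. Equivalently, the objective is a linear function of the weights $|z_i|^2$ on a simplex, or of the $\sigma_i^2$ on the capped simplex $\{0\le\sigma_i^2\le1,\ \sum_i\sigma_i^2=M\tau\}$, and its maximum is $\min\{M\tau,1\}\|\mathbf u\|^2$.

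For tightness when $\tau\le 1/M$, we check the rank-one matrix \eqref{eq:equality_case1} directly. It satisfies $\mathbf A\mathbf A^H=M\tau\,\mathbf q\mathbf q^H$, whose single nonzero eigenvalue is $M\tau\le1$. Hence the constraint holds, $\operatorname{tr}(\mathbf A\mathbf A^H)/M=\tau$, and $\|\mathbf A\mathbf u\|^2=M\tau\|\mathbf u\|^2$.

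For $1/M<\tau\le1$, the construction puts unit gain on the direction $\mathbf u$ and spreads the remaining budget over the orthogonal complement. Complete $\mathbf v_1=\mathbf u/\|\mathbf u\|$ to an orthonormal basis $\{\mathbf v_1,\ldots,\mathbf v_M\}$ and set
\begin{equation*}
\mathbf A=\mathbf Q\,\operatorname{diag}(1,\beta,\ldots,\beta)\,[\mathbf v_1,\ldots,\mathbf v_M]^H,\qquad \beta^2=\frac{M\tau-1}{M-1},
\end{equation*}
with $\mathbf Q$ an arbitrary unitary matrix. Here $\beta^2\in(0,1]$ exactly because $1<M\tau\le M$, so $\mathbf A\mathbf A^H\preceq\mathbf I_M$. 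The trace is $1+(M-1)\beta^2=M\tau$, and $\|\mathbf A\mathbf u\|^2=\|\mathbf u\|^2$. The case $M=1$ is trivial, since then $\tau\le1$ and $\mathbf A$ is a scalar with $|\mathbf A|^2=\tau$. Any other split of the residual budget $M\tau-1$ among the orthogonal directions, with each $\sigma_i^2\le 1$, would work equally well.

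I do not expect a real obstacle. The one step that needs care is the second case, where the feasibility condition $\beta^2\le1$ has to be verified and the requirement $\tau\le1$ is used. The bound itself is just a weighted-average argument over directional gains.
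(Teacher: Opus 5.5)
Your proof is correct and takes essentially the same approach as the paper. Your bound in SVD coordinates is the paper's Rayleigh--Ritz argument $\|\mathbf A\mathbf u\|^2\le\lambda_{\max}(\mathbf A^H\mathbf A)\|\mathbf u\|^2$ with $\lambda_{\max}\le\min\{M\tau,1\}$, your rank-one construction for $\tau\le 1/M$ is identical to the paper's, and for $1/M<\tau\le 1$ your matrix $\mathbf Q\operatorname{diag}(1,\beta,\ldots,\beta)[\mathbf v_1,\ldots,\mathbf v_M]^H$ is the $p=1$ instance of the paper's family, which places unit singular values on a $p$-dimensional subspace containing $\mathbf u$ and spreads the residual budget over the complement; yours has the small advantage of covering $\tau=1$ without a separate case.
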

\begin{IEEEproof}
Please refer to Appendix \ref{appendix_B}.
\end{IEEEproof}

Since \(\|\tilde{\mathbf{s}}_{\mathrm{RT}}(f)\|^2\) is the equivalent channel gain under perfect impedance matching, we introduce the corresponding ideal-matching SNR to simplify the subsequent expressions:
\begin{equation}
\mathrm{SNR}_{\mathrm{ideal}}^{\mathrm{MISO}}(f)=\|\tilde{\mathbf{s}}_{\mathrm{RT}}(f)\|^2 \frac{P_{\mathrm{T}}(f)}{N_{0}}
\end{equation}
For the same purpose, define $\mathcal{T}_{1}(f)=\min\{M \mathcal{T}(f),1\}$. According to \textit{Lemma} \ref{lemma1}, for any feasible $\mathbf F(f)$ in problem $(\mathcal P_1)$, the maximum attainable value of $\|\mathbf{F}(f)\tilde{\mathbf{s}}_\mathrm{RT}(f)\|^2$ is $\|\tilde{\mathbf{s}}_{\mathrm{RT}}(f)\|^2\mathcal{T}_{1}(f)$. Since this bound is tight and the Bode--Fano constraints depend on \(\mathbf F(f)\) only through the induced transmission coefficient \(\mathcal T(f)\), the matrix-valued degrees of freedom in \(\mathbf F(f)\) can be eliminated. Therefore, problem
\((\mathcal P_1)\) is equivalently reformulated as the following scalar functional optimization problem over \(\mathcal T(f)\):
\begin{subequations}
\begin{align}
(\mathcal{P}_2):
    &\max_{\mathcal{T}(f)} \;\int_{f_{\min}}^{f_{\max}}\log_2\left(1+\mathrm{SNR}_{\mathrm{ideal}}^{\mathrm{MISO}}(f)\mathcal{T}_{1}(f)
    \right)\,df
    \label{eq:P3_obj}
    \\
    \mathrm{s.t.}\;
    &
    \int_0^\infty
    \xi_i(f)
    \log\left(\frac{1}{1-\mathcal{T}(f)}\right)
    df
    \leq
    B_{\mathrm{BF},i},\quad \{i\}_1^{N_{\mathrm{BF}}},
    \label{eq:P3_cons1}
    \\
    & 
    0 \leq \mathcal{T}(f) \leq 1
    \label{eq:P3_cons2}
\end{align}
\end{subequations}

In problem $(\mathcal P_2)$, increasing $\mathcal T(f)$ beyond $\frac{1}{M}$ does not improve the achievable-rate objective but only tightens the Bode--Fano constraints. Hence, without loss of optimality, $\mathcal T(f)$ can be restricted to $0\leq\mathcal T(f)\leq\frac{1}{M}$. Over this range, $\mathcal T_1(f)=M\mathcal T(f)$, and problem $(\mathcal P_2)$ reduces to
the following scalar transmission-allocation problem:
\begin{subequations}
\begin{align}
(\mathcal{P}_3):
    &\max_{\mathcal{T}(f)} \;\int_{f_{\min}}^{f_{\max}}\log_2\left(1+\mathrm{SNR}_{\mathrm{ideal}}^{\mathrm{MISO}}(f) M\mathcal{T}(f)
    \right)\,df
    \label{eq:P4_obj}
    \\
    \mathrm{s.t.}\;
    &
    \int_0^\infty
    \xi_i(f)
    \log\left(\frac{1}{1-\mathcal{T}(f)}\right)
    df
    \leq
    B_{\mathrm{BF},i},\quad \{i\}_1^{N_{\mathrm{BF}}},
    \label{eq:P4_cons1}
    \\
    & 
    0 \leq \mathcal{T}(f) \leq \frac{1}{M}
    \label{eq:P4_cons2}
\end{align}
\end{subequations}

Therefore, problems \((\mathcal P_0)\) and \((\mathcal P_3)\) have the same optimal value, and the achievable-rate upper bound characterized by
\((\mathcal P_0)\) can be obtained by solving problem \((\mathcal P_3)\). Given an optimal scalar solution \(\mathcal T^\star(f)\) of problem \((\mathcal P_3)\), a corresponding optimal transfer matrix \(\mathbf F^\star(f)\) of \((\mathcal P_1)\) can be constructed using \eqref{eq:equality_case1}. In this construction,
\(\tilde{\mathbf s}_{\mathrm{RT}}(f)\) is an input direction of \(\mathbf F^\star(f)\) with directional gain \(\sqrt{M\mathcal T^\star(f)}\). Accordingly, attaining the upper bound
requires the MMN scattering matrix, together with the antenna-array scattering matrix, to induce the corresponding transfer matrix \(\mathbf F^\star(f)\).
\begin{remark}[\textit{Insights for Communication-Oriented MMN Design}]\label{remark2}
For the considered MISO system, the ideal-matching SNR can be attained with only $\mathcal T(f)=\frac{1}{M}$ when $\mathbf F(f)$ provides unit directional gain along the power-normalized propagation vector. Hence, perfect impedance matching, corresponding to $\mathcal T(f)=1$, is not necessary for maximizing communication performance. Under this condition, the propagation vector lies entirely in the dominant transmission subspace, yielding $\eta_{\gamma}(f)\!=\!1$ for any $\gamma \! \in \!(0,1]$. Since the Bode--Fano constraints limit the available matching capability across frequency, a communication-oriented MMN should prioritize aligning its dominant
transmission subspace with the power-normalized propagation vector rather than merely maximizing the transmission coefficient.
\end{remark}

\subsection{Optimal Transmission Coefficient Allocation}
Since problem \((\mathcal P_3)\) is convex, the KKT conditions are necessary and sufficient for optimality. The Lagrangian can be constructed as
\begin{equation}\label{eq:lagrangian}
\begin{aligned}
\mathcal{L}&\left(
\mathcal{T}(f), \{\mu_i\}_{i=1}^{N_{\mathrm{BF}}+2}
\right)
\\
&=-
\int_{f_{\min}}^{f_{\max}}
\log_2\left(
1+\mathrm{SNR}_{\mathrm{ideal}}^{\mathrm{MISO}}(f)
M\mathcal{T}(f)
\right)\,df
\\
&\quad+
\sum_{i=1}^{N_{\mathrm{BF}}}\mu_i
\left[
\int_{0}^{\infty}
\xi_i(f)
\log\left(
\frac{1}{1-\mathcal{T}(f)}
\right)\,df
-
B_{\mathrm{BF},i}
\right]
\\
&\quad-
\mu_{N_{\mathrm{BF}}+1}\mathcal{T}(f)
+
\mu_{N_{\mathrm{BF}}+2}
\left(\mathcal{T}(f)-\frac{1}{M}\right),
\end{aligned}
\end{equation}
where $\mu_{i} \geq 0$ is the Lagrange multiplier associated with the $i$-th inequality constraint. The corresponding complementary slackness conditions are
\begin{subequations}\label{eq:complementary_slackness}
\begin{align}
\mu_i^\star
\left[
\int_{0}^{\infty}
\xi_i(f)
\log\left(\frac{1}{1-\mathcal{T}^\star(f)}\right)
\,df
-
B_{\mathrm{BF},i}
\right]
=0,
\, \{i\}_1^{N_{\mathrm{BF}}},
\label{eq:cs_bf}
\\
\mu_{N_{\mathrm{BF}}+1}^\star\mathcal{T}^\star(f)
=0, \quad \mu_{N_{\mathrm{BF}}+2}^\star
\left(\mathcal{T}^\star(f)-\frac{1}{M}\right)
=0.
\end{align}
\end{subequations}
To derive the closed-form expression of the interior solution, we first set \(\mu_{N_{\mathrm{BF}}+1}^\star=\mu_{N_{\mathrm{BF}}+2}^\star=0\) and solve the resulting stationarity condition. The pointwise bounds are subsequently incorporated by clipping the resulting expression to the interval \([0,1/M]\). 

Let \(\delta(f)\) be an arbitrary admissible variation, and let \(\epsilon\) denote its scalar amplitude. The first-order stationarity condition from variational calculus is
\begin{equation}\label{eq:stability_cond}
\left.\frac{\mathrm{d}}{\mathrm{d}\epsilon}\left[\mathcal{L}\left(\mathcal{T}^\star(f)+\epsilon\delta(f)\right)\right]\right|_{\epsilon=0} = 0
\end{equation}
Substituting \eqref{eq:lagrangian} into (\ref{eq:stability_cond}) and collecting the terms associated with \(\delta(f)\), we obtain the first-variation condition in \eqref{eq:simplified_sta_cond}.
\begin{figure*}[hb]
\centering 
\hrulefill
\vspace*{2pt} 
\begin{equation}\label{eq:simplified_sta_cond}
\int_{0}^{\infty}
\frac{\delta(f)}{\ln 2}\left[
\left(
\frac{-M\cdot\mathrm{SNR}_{\mathrm{ideal}}^{\mathrm{MISO}}(f)}{1+M\cdot\mathrm{SNR}_{\mathrm{ideal}}^{\mathrm{MISO}}(f) \mathcal{T}^{\star}(f)}
\right)
+
\sum_{i=1}^{N_{\mathrm{BF}}}
\mu_i^\star
\frac{\ln2\xi_{i}(f)}{1-\mathcal{T}^{\star}(f)}
\right]
\mathrm{d}f-\mu_{N_{\mathrm{BF}+1}}^\star \delta(f)
+\mu_{N_{\mathrm{BF}+2}}^\star \delta(f)=0
\end{equation}
\end{figure*}
With $\mu_{N_{\mathrm{BF}}+1}^\star=0$ and
$\mu_{N_{\mathrm{BF}}+2}^\star=0$, the last two terms in
(\ref{eq:simplified_sta_cond}) vanish. Since $\delta(f)$ is arbitrary, the remaining coefficient of \(\delta(f)\) must be zero, which gives
\begin{equation} \label{eq:station_simplied1}
\left(
\frac{-M\cdot\mathrm{SNR}_{\mathrm{ideal}}^{\mathrm{MISO}}(f)}{1+M\cdot\mathrm{SNR}_{\mathrm{ideal}}^{\mathrm{MISO}}(f) \mathcal{T}^{\star}(f)}
\right)
+
\sum_{i=1}^{N_{\mathrm{BF}}}
\mu_i^\star
\frac{\ln2\xi_{i}(f)}{1-\mathcal{T}^{\star}(f)} = 0
\end{equation}
Let $\boldsymbol{\mu}^\star=[\mu_1^\star,\ldots,\mu_{N_{\mathrm{BF}}}^\star]^T$ collect the optimal Lagrange multipliers associated with the Bode--Fano constraints. Solving \eqref{eq:station_simplied1} for \(\mathcal T^\star(f)\) and enforcing the pointwise bounds \(0\leq \mathcal T(f)\leq 1/M\) yield its closed-form expression. Together with the complementary slackness condition in \eqref{eq:cs_bf}, primal feasibility, and dual feasibility, the optimal transmission coefficient and the corresponding
Lagrange multipliers are characterized by
\begin{subequations}\label{eq:opt_mu_T_eqs}
\begin{align}
&\mathcal{T}^\star(f;\boldsymbol{\mu}^\star) = \left[
\frac{1 - \ln 2 \frac{\sum_{i=1}^{N_{\mathrm{BF}}}\mu_i^\star \xi_{i}(f)}{M\cdot\mathrm{SNR}_{\mathrm{ideal}}^{\mathrm{MISO}}(f)}}
{1 + \ln 2 \sum_{i=1}^{N_{\mathrm{BF}}} \mu_i^\star \xi_{i}(f)}
\right]_{0}^{\frac{1}{M}},\quad \boldsymbol{\mu}^\star \ge 0 
\label{eq:station_simplied2}
\\
&\mu_i^\star \left[\int_{0}^{\infty} \xi_i(f) \log\left(\frac{1}{1-\mathcal{T}^\star(f)}\right) \,df-B_{\mathrm{BF},i} \right]=0, \, \{i\}_{1}^{N_{\mathrm{BF}}} 
\label{eq:cs_bf2}
\\
&\int_{0}^{\infty} \xi_i(f) \log\left(\frac{1}{1-\mathcal{T}^\star(f)}\right) \,df \leq B_{\mathrm{BF},i},\, \{i\}_{1}^{N_{\mathrm{BF}}} 
\label{eq:pri_fea_bf2}
\end{align}
\end{subequations}

\begin{remark}[\textit{Insights into Optimal Transmission Allocation}]\label{remark3}
The closed-form solution in \eqref{eq:station_simplied2} shows that the optimal transmission coefficient should be allocated nonuniformly across frequency according to the tradeoff between achievable-rate benefit and Bode--Fano matching cost. Specifically, frequencies with a higher ideal-matching $\mathrm{SNR}_{\mathrm{ideal}}^{\mathrm{MISO}}(f)$ and a smaller effective matching cost $\sum_{i=1}^{N_{\mathrm{BF}}}\mu_i^\star\xi_i(f)$ are assigned larger transmission coefficients.
\end{remark}

To facilitate the solution of the KKT system in \eqref{eq:opt_mu_T_eqs}, we introduce the following notation. For any nonnegative multiplier vector \(\boldsymbol{\mu}=[\mu_1,\ldots,\mu_{N_{\mathrm{BF}}}]^T\), let \(\mathcal T_{\boldsymbol{\mu}}(f)\) denote the transmission coefficient obtained from \eqref{eq:station_simplied2} by replacing \(\boldsymbol{\mu}^\star\) with \(\boldsymbol{\mu}\). The corresponding
cost of the \(i\)-th Bode--Fano constraint is defined as
\begin{equation}\label{eq:C_mu_def}
C_i(\boldsymbol{\mu})
\triangleq
\int_{0}^{\infty}
\xi_i(f)
\log\left(
\frac{1}{1-\mathcal T_{\boldsymbol{\mu}}(f)}
\right)
\,df ,
\quad \{i\}_{1}^{N_{\mathrm{BF}}}.
\end{equation}
The KKT system in \eqref{eq:opt_mu_T_eqs} exhibits a monotonic structure. Since the Bode--Fano weighting functions $\xi_{i}(f)$ are nonnegative, increasing any component of \(\boldsymbol{\mu}\) makes $\mathcal{T}_{\boldsymbol{\mu}}(f)$ nonincreasing at any frequency. Consequently, each $C_{i}(\boldsymbol{\mu})$ is componentwise nonincreasing with respect to \(\boldsymbol{\mu}\). Thus, increasing the multipliers facilitates satisfaction of the Bode--Fano constraints, but at the cost of lower transmission coefficients and a reduced achievable rate. This tradeoff motivates increasing each multiplier only as much as needed to restore feasibility.

Based on the above monotonicity, a greedy constraint-wise repair method to obtain a feasible approximate solution to (\ref{eq:opt_mu_T_eqs}) is proposed. Let \(\boldsymbol{\mu}^{(t)}=[\mu_1^{(t)},\ldots,\mu_{N_{\mathrm{BF}}}^{(t)}]^T\) denote the multiplier vector at iteration \(t\), and initialize the search with \(\boldsymbol{\mu}^{(0)}=\mathbf 0\). The corresponding transmission coefficient satisfies $\mathcal{T}_{\boldsymbol{\mu}^{(0)}}(f)=\frac{1}{M}$ at each frequency, thereby maximizing the achievable-rate objective when the Bode--Fano
constraints are ignored. This initial allocation also maximizes each Bode--Fano cost \(C_i(\boldsymbol{\mu}^{(0)})\) and may therefore violate one or more Bode--Fano constraints. At each iteration \(t\), the relative violations of all Bode--Fano constraints are evaluated, and the \(k\)-th constraint with the largest relative violation is selected. With all other multipliers fixed, a bisection search is performed over \(\mu_k\) until the \(k\)-th Bode--Fano constraint is satisfied from the feasible side, that
is, \(C_k(\boldsymbol{\mu^{(t)}})\leq B_{\mathrm{BF},k}\), with the relative
residual below a prescribed tolerance \(\varepsilon\). This update repairs the selected constraint while avoiding an excessive increase of \(\mu_k\), thereby limiting the associated rate loss. The procedure is repeated until all Bode--Fano constraints are satisfied. The detailed implementation is summarized in Algorithm~\ref{alg:greedy_constraint_repair}.

\begin{algorithm}[t]
\caption{Greedy Constraint-Wise Repair Method}
\label{alg:greedy_constraint_repair}
\DontPrintSemicolon

\KwIn{
Bode--Fano parameters \(\xi_{i}(f)\) and \(B_{\mathrm{BF},i}\), tolerance \(\varepsilon\).
}

\KwOut{
Suboptimal transmission coefficient \(\mathcal T_{\widehat{\boldsymbol{\mu}}}(f)\) and multiplier vector \(\widehat{\boldsymbol{\mu}}\).
}

\textbf{Initialize:} \(\boldsymbol{\mu}^{(0)}=\mathbf 0\) and iteration index $t=0$.\;

Evaluate \(C_i(\boldsymbol{\mu}^{(t)})\), \(i=1,\ldots,N_{\mathrm{BF}}\).\;

\While{
\(\displaystyle\max_{i}\frac{C_i(\boldsymbol{\mu}^{(t)})-B_{\mathrm{BF},i}}{B_{\mathrm{BF},i}} >
0\)
}
{
    Select the most violated constraint:
    \[
    k
    =
    \arg\max_i
    \left[
    \frac{C_i(\boldsymbol{\mu}^{(t)})-B_{\mathrm{BF},i}}
    {B_{\mathrm{BF},i}}
    \right]_+ .
    \]

    Set $\boldsymbol{\mu}_{\mathrm{lo}}=\boldsymbol{\mu}^{(t)}$, \(\boldsymbol{\mu}_{\mathrm{hi}}=\boldsymbol{\mu}^{(t)}\), and $\mu_{\mathrm{hi}}=1$.
    
    Set \(\boldsymbol{\mu}_{\mathrm{hi}}\left[k\right]=\mu_{\mathrm{hi}}\).\;

    \While{\(C_k(\boldsymbol{\mu}_{\mathrm{hi}})>B_{\mathrm{BF},k}\)}
    {
        Set \(\mu_{\mathrm{hi}}=2\mu_{\mathrm{hi}}\) and \(\boldsymbol{\mu}_{\mathrm{hi}}\left[k\right]=\mu_{\mathrm{hi}}\) .\;
    }
    Set \(\boldsymbol{\mu}_{\mathrm{mid}}=\frac{\boldsymbol{\mu}_{\mathrm{hi}}+\boldsymbol{\mu}_{\mathrm{lo}}}{2}\).\;
    \While{
    \(\displaystyle
    \frac{
    \left|C_k(\boldsymbol{\mu}_{\mathrm{mid}}) \!-\!B_{\mathrm{BF},k}\right|
    }{
    B_{\mathrm{BF},k}
    }\!
    >\!
    \varepsilon
    \) or $C_k(\boldsymbol{\mu}_{\mathrm{mid}})\!>\!B_{\mathrm{BF},k}$
    }
    {
        \eIf{\(C_k(\boldsymbol{\mu}_{\mathrm{mid}})>B_{\mathrm{BF},k}\)}
        {
            Set \(\boldsymbol{\mu}_{\mathrm{lo}}=\boldsymbol{\mu}_{\mathrm{mid}}\).\;
        }
        {
            Set \(\boldsymbol{\mu}_{\mathrm{hi}}=\boldsymbol{\mu}_{\mathrm{mid}}\).\;
        }
        Set
   $\boldsymbol{\mu}_{\mathrm{mid}}=\frac{\boldsymbol{\mu}_{\mathrm{hi}}+\boldsymbol{\mu}_{\mathrm{lo}}}{2}$.
       
    }

    Set $t=t+1$ \, \text{and} \, \(\boldsymbol{\mu}^{(t)}=\boldsymbol{\mu}_{\mathrm{mid}}\).\;

    Evaluate \(C_i(\boldsymbol{\mu}^{(t)})\), \(i=1,\ldots,N_{\mathrm{BF}}\).\;
}
Set
\(\widehat{\boldsymbol{\mu}}
=\boldsymbol{\mu}^{(t)}\).\;
\Return \(\mathcal T_{\widehat{\boldsymbol{\mu}}}(f)\) and \(\widehat{\boldsymbol{\mu}}\).\;

\end{algorithm}
It should be noted that the proposed method does not solve the full KKT system exactly. Increasing a multiplier selected at a later iteration may further
reduce \(\mathcal T_{\boldsymbol{\mu}}(f)\) over the frequency band. Consequently, a constraint that was previously made approximately active from the feasible side may become strictly feasible after subsequent repairs. The final multiplier vector therefore may not satisfy all complementary slackness conditions, and the resulting feasible solution may be conservative.

\section{Circuit Realization of Multiport Matching Networks}\label{section_5}
In this section, we first establish the mapping from the parameters of a multiport matching circuit to the achievable rate and formulate a rate-maximization problem for a prescribed circuit topology. Then, a multi-start gradient-based optimization method with structured initializations is developed to optimize the circuit parameters. 

\subsection{From Circuit Parameters to Achievable Rate}\label{section_5_1}
For a prescribed multiport circuit topology, we directly optimize the circuit parameters by maximizing the achievable rate, rather than fitting the theoretically optimal transmission coefficient \(\mathcal T^\star(f)\) or an equality-achieving transfer-matrix response \(\mathbf F^\star(f)\). This rate-oriented formulation is adopted because a smaller fitting error in \(\mathcal T^\star(f)\) or \(\mathbf F^\star(f)\) does not necessarily translate into a higher achievable rate. As indicated by (\ref{eq:P4_obj}) and (\ref{eq:P2_obj}), fitting errors at different frequencies affect the final rate unequally, since the rate integrand depends on the frequency-dependent \(\mathrm{SNR}_{\mathrm{ideal}}^{\mathrm{MISO}}(f)\) or the propagation vector \(\tilde{\mathbf{s}}_\mathrm{RT}(f)\), and is further shaped by the logarithmic rate function.

\begin{figure}
\centering
\includegraphics[scale=0.4]{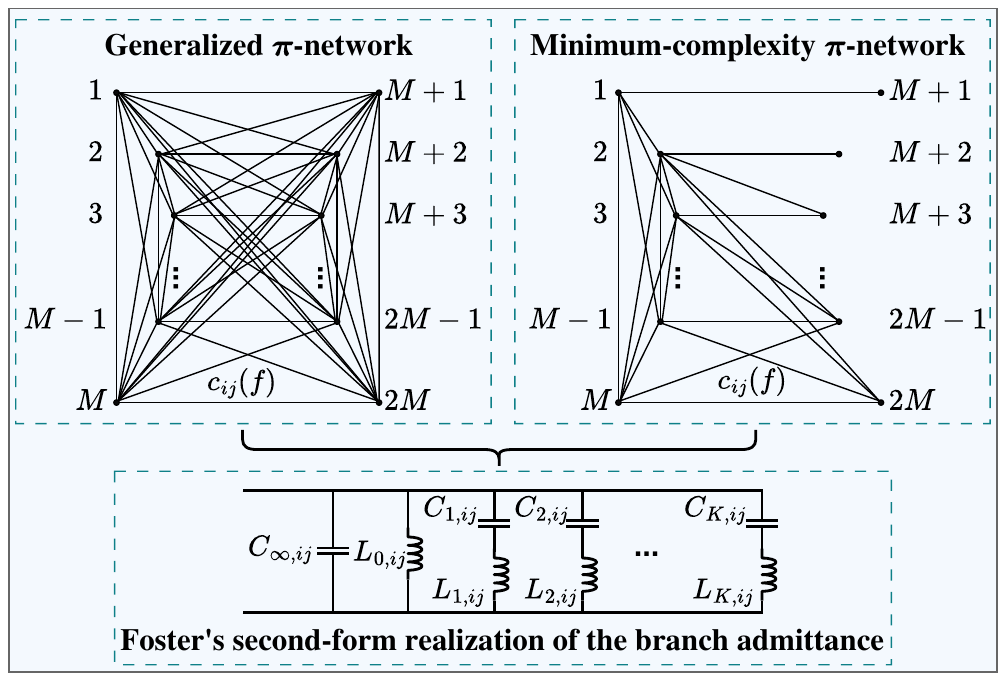}
\caption{Generalized and minimum-complexity $\Pi$-network topologies and the realization of each branch admittance $c_{ij}(f)$ in Foster's second form. Grounded shunt branches are omitted for clarity.}\label{circuit_structure} %图片标题
\end{figure}

As illustrated in Fig.~\ref{circuit_structure}, a generalized \(\Pi\)-network and a minimum-complexity \(\Pi\)-network are considered as two prescribed multiport circuit topologies. The \(2M\) nodes are indexed such that nodes \(1,\ldots,M\) correspond to the source-side ports and nodes \(M+1,\ldots,2M\) correspond to the antenna-side ports. Each topology is specified by a branch set $\mathcal B\subseteq\{(i,j)\mid 1\le i\le j\le 2M\}$, where \(i<j\) denotes an inter-node branch and \(i=j\) denotes a shunt branch connected to the common ground. The generalized \(\Pi\)-network contains all admissible branches, providing the largest number of circuit degrees of freedom for shaping the multiport transfer response and aligning the power-normalized propagation vector with the dominant transmission subspace. By contrast, the minimum-complexity \(\Pi\)-network retains only the minimum number of branches required to realize single-frequency multiport conjugate matching~\cite{nie2014systematic}. For each \((i,j)\in\mathcal B\), \(c_{ij}(f)\) denotes the corresponding frequency-dependent branch admittance.

Following classical one-port LC network synthesis, each connected branch admittance \(c_{ij}(f)\) is parameterized by a finite Foster's second form \cite{rodriguez2021systematic} with \(K\) series-LC resonant sections. This structure realizes a passive and lossless one-port admittance using positive inductors and capacitors. As shown in Fig. \ref{circuit_structure}, each branch realization consists of a shunt capacitor \(C_{\infty,ij}\), a shunt inductor \(L_{0,ij}\), and \(K\) parallel resonant sections, where the \(m\)-th section comprises \(L_{m,ij}\) and \(C_{m,ij}\) connected in series. Increasing \(K\) provides greater flexibility in shaping the branch admittance over the target frequency band, at the cost of increasing the dimension of the circuit-parameter optimization problem.

With the topology and branch realization specified, we next establish the mapping from the circuit parameters to the achievable rate. Since capacitance and inductance values in matching circuits are typically on the order of pF and nH, respectively, and may span several orders of magnitude, their logarithms are used as the optimization variables to improve numerical scaling. For branch \((i,j)\), the optimization variable vector is defined as
\begin{equation*}
\boldsymbol{\theta}_{ij}=
\left[
\theta_{ij,\infty}^{C},
\theta_{ij,0}^{L},
\theta_{ij,1}^{L},
\theta_{ij,1}^{C},
\ldots,
\theta_{ij,K}^{L},
\theta_{ij,K}^{C}
\right]^T
\in\mathbb R^{2K+2},
\end{equation*}
where the physical component values are
\begin{equation*}
\begin{aligned}
&C_{\infty,ij}=\exp\left(\theta_{ij,\infty}^{C}\right),\quad
L_{0,ij}=\exp\left(\theta_{ij,0}^{L}\right), \\
&L_{m,ij}=\exp\left(\theta_{ij,m}^{L}\right),\quad
C_{m,ij}=\exp\left(\theta_{ij,m}^{C}\right),\, m=1,\ldots,K.
\end{aligned}
\end{equation*}
The resulting Foster second-form branch admittance is
\begin{equation}\label{eq:foster_branch_admittance}
\begin{aligned}
c_{ij}(f;\boldsymbol{\theta}_{ij})
=&
\mathrm{j}2\pi f C_{\infty,ij}
+\frac{1}{\mathrm{j}2\pi f L_{0,ij}} \\
&+
\sum_{m=1}^{K}
\frac{1}{
\mathrm{j}2\pi f L_{m,ij}
+
\frac{1}{\mathrm{j}2\pi f C_{m,ij}}
}.
\end{aligned}
\end{equation}
The network admittance matrix is assembled from the branch admittances using the standard branch-admittance stamping rule \cite{hao2022realizable}
\begin{equation}\label{eq:Y_stamping}
\mathbf{Y}(f;\boldsymbol{\theta})
=
\sum_{(i,j)\in\mathcal{B}}
c_{ij}(f;\boldsymbol{\theta}_{ij})\mathbf{E}_{ij},
\end{equation}
where \(\boldsymbol{\theta}
=\{\boldsymbol{\theta}_{ij}:(i,j)\in\mathcal B\}\) collects all optimization variables, and $\mathbf{E}_{ij}\in\mathbb{R}^{2M\times 2M}$ is the constant stamp matrix of branch $(i,j)$. For an inter-node branch $i<j$,
\begin{equation}\label{eq:Eij_inter_node}
\begin{aligned}
{}[\mathbf{E}_{ij}]_{ii}
=
[\mathbf{E}_{ij}]_{jj}
&=1,\\
[\mathbf{E}_{ij}]_{ij}
=
[\mathbf{E}_{ij}]_{ji}
&=-1,
\end{aligned}
\end{equation}
with all other entries being zero. For a grounded branch $i=j$, the stamp matrix reduces to
\begin{equation}\label{eq:Eij_shunt}
[\mathbf{E}_{ii}]_{ii}=1,
\end{equation}
with all other entries being zero. Subsequently, the scattering matrix of the MMN is obtained from the admittance matrix as
\begin{equation}\label{eq:Y_to_S}
\mathbf S_{\mathrm M}(f;\boldsymbol{\theta})=
\left(
\mathbf I_{2M}-
Z_0\mathbf Y(f;\boldsymbol{\theta})
\right)
\left(
\mathbf I_{2M}
+
Z_0\mathbf Y(f;\boldsymbol{\theta})
\right)^{-1}.
\end{equation}
Finally, substituting the resulting $\mathbf S_{\mathrm M}(f;\boldsymbol{\theta})$ into (\ref{eq:hMISO}) and (\ref{eq:MISO_rate}) yields the equivalent channel $\mathbf h_{\mathrm{MISO}}(f;\boldsymbol{\theta})$ and the achievable rate $R_{\mathrm{MISO}}(\boldsymbol{\theta})$, respectively.

In practical implementations, the component values are restricted to finite realizable ranges, denoted by \([C_{\min},C_{\max}]\) and \([L_{\min},L_{\max}]\). Accordingly, for a prescribed topology \(\mathcal B\) and a fixed number of resonant sections \(K\), the rate-oriented circuit realization problem is formulated as
\begin{subequations}\label{circuit_realize}
\begin{align}
(\mathcal{P}_5): &\max_{\boldsymbol{\theta}} \;
R_{\mathrm{MISO}}(\boldsymbol{\theta})
\label{circuit_target}
\\
\mathrm{s.t.}\;
&
\ln C_{\min}
\leq
\theta_{ij,\kappa}^{C}
\leq
\ln C_{\max},
\;
\kappa\in\mathcal K_C,\;
(i,j)\in\mathcal B, \label{circuit_cs1}
\\
&
\ln L_{\min}
\leq
\theta_{ij,\ell}^{L}
\leq
\ln L_{\max},
\;
\ell\in\mathcal K_L,\;
(i,j)\in\mathcal B. \label{circuit_cs2}
\end{align}
\end{subequations}
where $\mathcal K_C=\{\infty,1,\ldots,K\}$ and $\mathcal K_L=\{0,1,\ldots,K\}$ denote the index sets of the capacitance and inductance variables, respectively.

\subsection{Multi-Start Gradient-Based Optimization}\label{section_5_2}
Since problem \((\mathcal P_{5})\) is nonconvex, a gradient-based solver may converge to different stationary points depending on the initialization. To mitigate this issue, we solve problem \((\mathcal P_{5})\) from multiple structured initializations and retain the solution that yields the highest achievable rate. Rather than randomly generating capacitance and inductance values, the starting points are constructed by combining physically meaningful component scales with prescribed resonant-frequency configurations. This strategy avoids arbitrary combinations of inductance and capacitance values that may yield severely mismatched initial responses, thereby improving the effectiveness of the multi-start optimization.

The structured initializations are constructed as follows. For the target frequency band \([f_{\min},f_{\max}]\), the center frequency is defined as
\begin{equation}
f_{\mathrm{c}}
=
\frac{f_{\min}+f_{\max}}{2}.
\end{equation}
Using the reference impedance \(Z_0\), the reference inductance and capacitance are defined as
\begin{equation}
L_{\mathrm{ref}}
=
\frac{Z_0}{2\pi f_{\mathrm{c}}},
\qquad
C_{\mathrm{ref}}
=
\frac{1}{2\pi f_{\mathrm{c}}Z_0}.
\end{equation}
These reference values correspond to inductive and capacitive reactance on the order of \(Z_0\) around \(f_{\mathrm c}\), and therefore provide physically reasonable component scales for initialization. Hereafter, the superscript $(0)$ denotes an initial value used to construct a starting point for the nonlinear optimization. The nonresonant elements in each Foster second-form branch are initialized as scaled reference values:
\begin{equation}
C_{\infty,ij}^{(0)}
=
\alpha_{\infty} C_{\mathrm{ref}},
\qquad
L_{0,ij}^{(0)}
=
\alpha_{0} L_{\mathrm{ref}},
\end{equation}
where \(\alpha_{\infty}\) and \(\alpha_{0}\) are prescribed scaling factors. For the \(m\)-th resonant section, the initial inductance is set to
\begin{equation}
L_{m,ij}^{(0)}
=
\alpha_{m}^{L} L_{\mathrm{ref}}, \quad m=1,\dots,K,
\end{equation}
where \(\alpha_{m}^{L}\) is selected from a prescribed set of inductance scaling factors. Given an initial resonant frequency \(f_m\), the corresponding capacitance is determined from the resonance condition as
\begin{equation}
C_{m,ij}^{(0)}
=
\frac{1}
{
\left(2\pi f_m\right)^2 L_{m,ij}^{(0)}
}, \quad m=1,\dots,K.
\end{equation}

Multiple structured initial points are generated by combining different inductance scaling factors $\alpha_{m}^{L}$ with different resonant-frequency configurations $f_{m}$. The resonant frequencies are selected at representative locations of the target frequency band, allowing starting points to explore diverse frequency-selective response patterns while maintaining physically reasonable component scales. 

The proposed multi-start gradient-based optimization is implemented using \textit{fmincon} from the MATLAB Optimization Toolbox. For each structured initial point, \textit{fmincon} is applied to minimize the negative achievable rate, \(-R_{\mathrm{MISO}}(\boldsymbol{\theta})\), subject to the box constraints
in \eqref{circuit_cs1} and \eqref{circuit_cs2}, which are imposed directly
on the logarithmic circuit variables. The solution with the highest achievable rate over all runs is retained. To reduce the computational cost and avoid finite-difference approximations, analytical gradients derived by applying the chain rule to the forward mapping established in Section~\ref{section_5_1} are supplied to the solver. The detailed gradient expressions are provided in Appendix~\ref{appendix_gradient}. 

\section{Numerical Results and Discussion}\label{section_6}
In this section, we first use a multiport conjugate-matching example to validate the proposed directional-gain interpretation, particularly the insight that better alignment with the power-normalized propagation vector can enhance the equivalent channel gain. Then, a strongly mismatched antenna array is considered to evaluate the achievable-rate upper bound under multiport Bode--Fano constraints and the proposed circuit
realization method.
\subsection{Application of Directional Gain Interpretation}\label{section_6_1}
At the center frequency, we consider a multiport conjugate-matching example to examine how the directional structure of the MMN-induced transfer matrix affects the SNR beyond the overall transmission coefficient. The simulation employs a uniform linear array of four half-wavelength dipoles. The center frequency is $f_c=5$~GHz, and the corresponding wavelength is $\lambda=\frac{c}{f_{\mathrm c}}$, where $c$ is the speed of light. The dipole length and width are set to $l=0.5\lambda$ and $w=0.05l$, respectively, and the inter-element spacing is $0.4\lambda$. The S-parameters of the mutually coupled transmit antenna array is obtained using the MATLAB Antenna Toolbox. Based on these S-parameters, a passive and lossless MMN is synthesized using the method in \cite{nie2014systematic} to achieve multiport conjugate matching at \(f_{\mathrm c}\). The transmitter--receiver distance is set to $d=800$~m. The total transmit power over the considered bandwidth $B$ is fixed at $0.01$~W, and is uniformly distributed over frequency, such that $P_{\mathrm T}(f)=\frac{0.01}{B}$. The effective noise temperature is set to $T_n=290$~K. The wireless propagation channel $\mathbf{s}_\mathrm{RT}(f)$ is generated according to the scattering model in \cite{deshpande2024generalization}. The line-of-sight path has an angle $\theta=\frac{\pi}{4}$ relative to the array broadside. In addition, eight non-line-of-sight paths are included, with $\theta_l \sim \mathcal{U}(\theta-\frac{\pi}{20},\theta+\frac{\pi}{20})$ and $\alpha_l \sim \mathcal{CN}(0,1)$.

\begin{figure}
\centering
\includegraphics[scale=0.35]{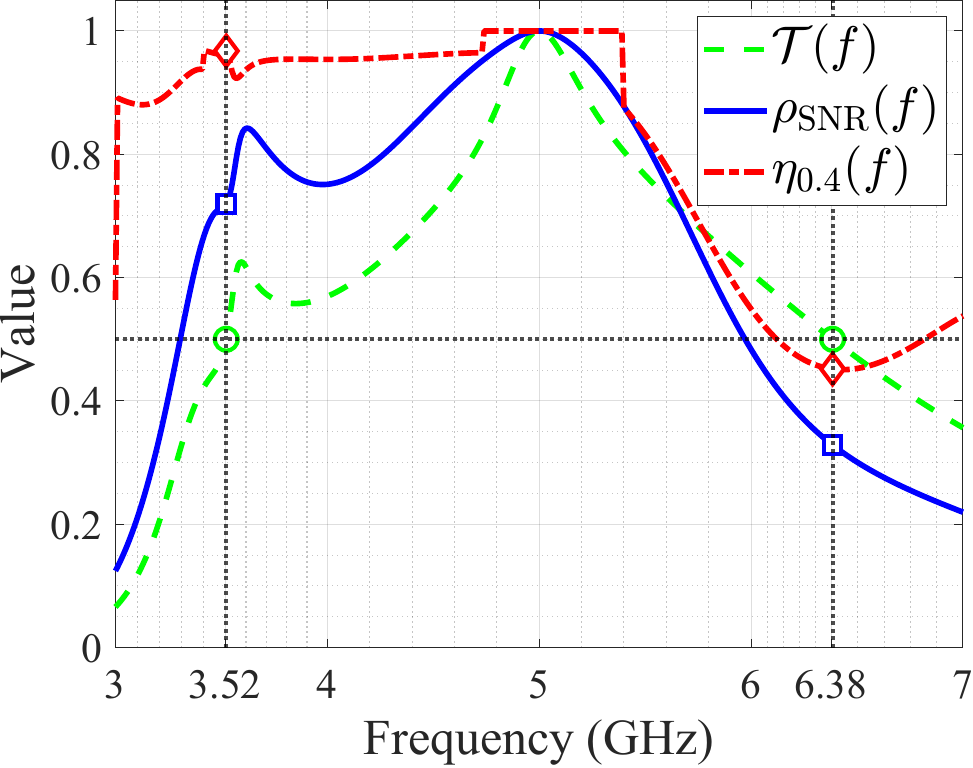}
\caption{Transmission coefficient, SNR retention factor, and alignment coefficient versus frequency for an MMN synthesized to achieve multiport conjugate matching at \(5\)~GHz.}\label{result_1_1} %图片标题
\end{figure}

Following the circuit-to-channel calculation procedure in Section~\ref{section_5_1}, the equivalent channel \(\mathbf h_{\mathrm{MISO}}(f)\) is obtained from the synthesized MMN. To quantify the resulting SNR degradation, we define the SNR retention factor as $\rho_{\mathrm{SNR}}(f)=\frac{\|\mathbf{h}_{\mathrm{MISO}}(f)\|^2}{\|\tilde{\mathbf{s}}_\mathrm{RT}(f)\|^2}$, which equals the ratio of the actual SNR to the corresponding ideal-matching SNR. Fig.~\ref{result_1_1} shows $\rho_{\mathrm{SNR}}(f)$, the transmission coefficient $\mathcal T(f)$, and the alignment coefficient $\eta_{0.4}(f)$ as functions of frequency. At the center frequency, the multiport conjugate matching network achieves perfect matching, yielding $\mathcal T(f_{\mathrm c})=1$, $\eta_{0.4}(f_{\mathrm c})=1$, and no SNR degradation occurs. Away from $f_{\mathrm c}$, $\rho_{\mathrm{SNR}}(f)$ is governed jointly by the overall transmission level and the directional alignment. Below $5$~GHz, $\eta_{0.4}(f)$ remains high despite the decrease in $\mathcal T(f)$, so the SNR degradation is relatively mild. Above $5$~GHz, both $\mathcal T(f)$ and $\eta_{0.4}(f)$ decrease, resulting in a more pronounced SNR degradation. This difference is further illustrated by comparing the results at
\(3.52\)~GHz and \(6.38\)~GHz. Although
\(\mathcal T(f)=0.5\) at both frequencies, the corresponding SNR retention factors are \(0.72\) and \(0.33\), respectively. The higher SNR retention at \(3.52\)~GHz is attributed to the stronger directional alignment, with \(\eta_{0.4}(f)=0.96\), compared with \(0.45\) at \(6.38\)~GHz.

\subsection{Bode--Fano-Constrained Rate Bound and Circuit Realization}
We next evaluate the achievable-rate upper bound and circuit realizations for the MISO system with a transmit-side MMN. The simulation employs a uniform linear array of four dipoles, with the operating bands centered at \(f_{\mathrm c}=5\)~GHz. For each bandwidth \(B\), the lower and upper band edges are \(f_{\min}=f_{\mathrm c}-\frac{B}{2}\) and \(f_{\max}=f_{\mathrm c}+\frac{B}{2}\), respectively. To induce strong mutual coupling and appreciable impedance mismatch, the antenna geometry is specified relative to the wavelength \(\lambda=\frac{c}{f_{\lambda}}\) at the design frequency \(f_{\lambda}=4.5\)~GHz. Specifically, the inter-element spacing is \(0.1\lambda\), while the dipole length and width are
\(l=0.25\lambda\) and \(w=0.008l\), respectively. The transmit power spectral density, noise temperature, and wireless propagation channel are configured in the same way as in Section~\ref{section_6_1}.

\begin{figure}
\centering
\includegraphics[scale=0.35]{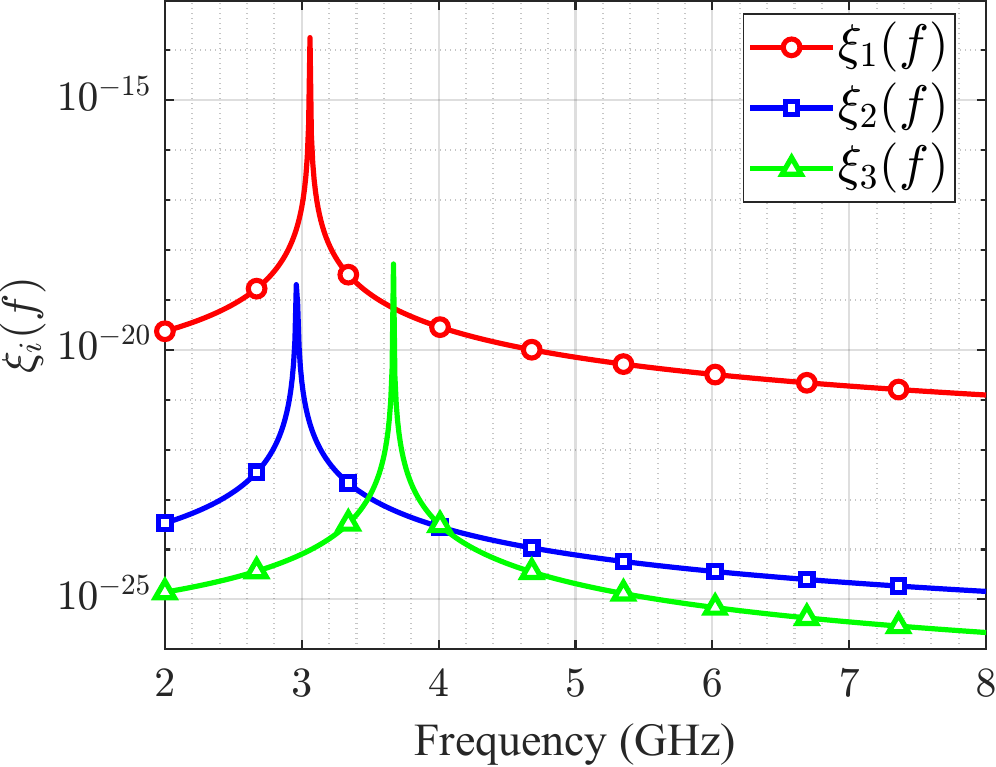}
\caption{Weighting functions \(\xi_i(f)\) of the three multiport Bode--Fano constraints for the considered four-element dipole array.}\label{BF_coefficient} %图片标题
\end{figure}

For the considered array, the multiport Bode--Fano constraints are derived from a passive rational approximation of the array scattering matrix. Specifically, the scattering matrix is first simulated over \(2\)--\(8\)~GHz at \(10\)-MHz intervals using the MATLAB Antenna Toolbox. The simulated scattering parameters are then fitted using using the open-source Matrix Fitting Toolbox~\cite{deschrijver2008macromodeling,gustavsen2009fast}, with the fitting order set to \(8\). Based on the resulting passive rational model, the weighting functions \(\xi_i(f)\) and the corresponding constraint constants \(B_{\mathrm{BF},i}\) are computed following the procedure in~\cite{nie2016bandwidth2}. The weighting functions of the three resulting Bode--Fano constraints are shown in Fig.~\ref{BF_coefficient}.

\begin{figure}
\centering
\includegraphics[scale=0.45]{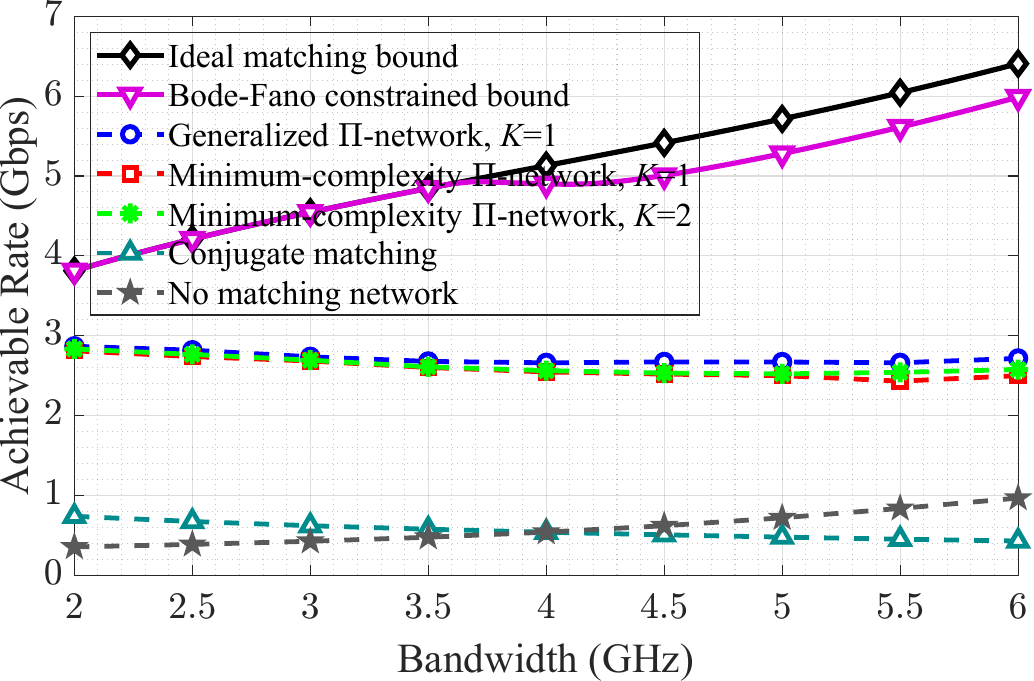}
\caption{Comparison of achievable rates among the ideal-matching bound, the Bode--Fano constrained bound, different matching-network implementations, and the unmatched case as a function of bandwidth.}\label{rate_compare} %图片标题
\end{figure}

The proposed greedy constraint-wise repair method is then used to obtain a feasible approximation to the Bode--Fano-constrained optimal transmission coefficient \(\mathcal T^\star(f)\) and the associated achievable-rate bound. For circuit implementation, three representative configurations are evaluated: minimum-complexity \(\Pi\)-network with Foster second-form orders \(K=1\) and \(K=2\), and the generalized \(\Pi\)-network with \(K=1\). For each configuration and bandwidth \(B\), the proposed multi-start gradient-based method directly optimizes the component values to maximize the achievable rate. The component bounds are set to \(C_{\min}=10^{-16}\,\mathrm{F}\), \(C_{\max}=10^{-9}\,\mathrm{F}\), \(L_{\min}=10^{-12}\,\mathrm{H}\), and \(L_{\max}=10^{-5}\,\mathrm{H}\). For all configurations, the nonresonant elements are initialized using \(\alpha_{\infty}=10^{-2}\) and \(\alpha_{0}=100\), resulting in a weakly connected initial network. Since the \(K=2\) and fully interconnected configurations involve higher-dimensional parameter spaces, a larger set
of structured initial points is used. The remaining initialization settings are specified as follows:
\begin{itemize}
    \item For the minimum-complexity \(\Pi\)-network with \(K=1\), \(\alpha_{1}^{L}\in\{1,5,10\}\) and \(f_{1}\in\{f_{\min}+0.25B,\, f_{\mathrm c},\,
    f_{\min}+0.75B\}\) are used, resulting in \(9\) initial points.

    \item For the minimum-complexity \(\Pi\) network with \(K=2\), the
    first resonant branch is initialized using
    \(\alpha_{1}^{L}\in\{1,5,10\}\) and
    \(f_{1}\in\{f_{\min}+0.2B,\, f_{\min}+0.25B,\,
    f_{\min}+0.4B,\, f_{\min}+0.6B,\,
    f_{\min}+0.75B,\, f_{\min}+0.8B\}\). The second resonant branch is initialized as a weak out-of-band branch with
    \(C_{2,ij}^{(0)}=10^{-15}\,\mathrm{F}\) and
    \(f_{2}=10f_{\max}\), while \(L_{2,ij}^{(0)}\) is determined by the resonance condition.

    \item For the generalized \(\Pi\)-network with \(K=1\), the same extended sets of \(\alpha_{1}^{L}\) and \(f_{1}\) as those
    used for the first resonant branch in the \(K=2\) configuration are adopted.
\end{itemize}

\begin{figure*}[!t]
    \centering

    % ===================== Row 1: Transmission coefficient =====================
    \subfloat[$B=2$ GHz]{
        \includegraphics[width=0.315\textwidth]{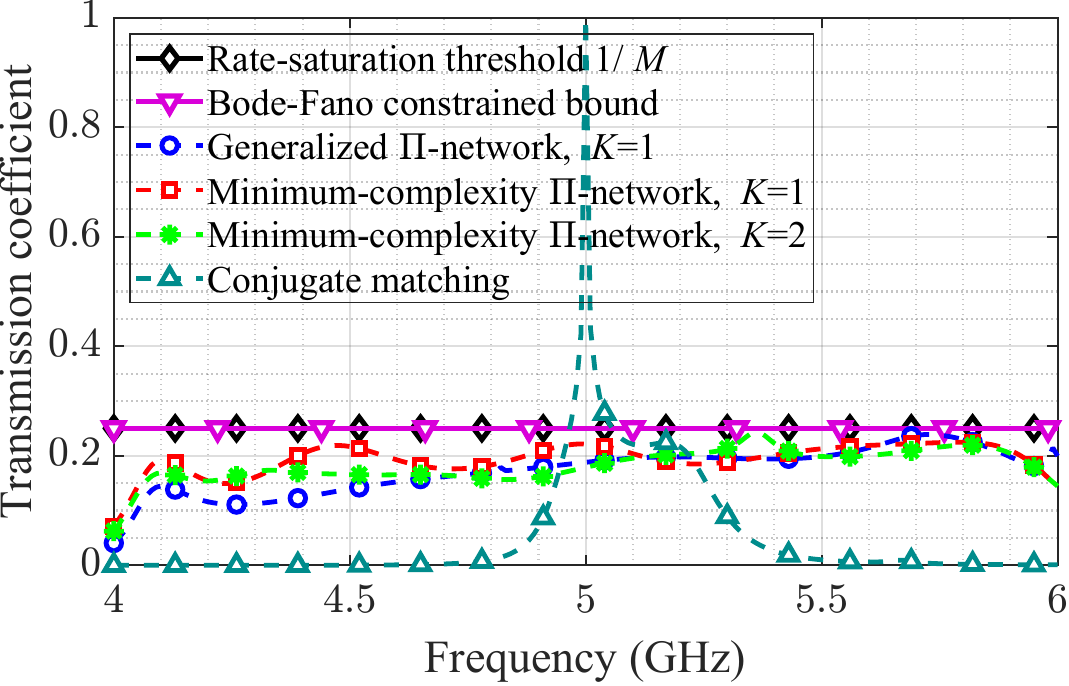}
        \label{fig:transmission_2GHz}
    }
    \hspace{0.0001\textwidth}
    \subfloat[$B=4$ GHz]{
        \includegraphics[width=0.315\textwidth]{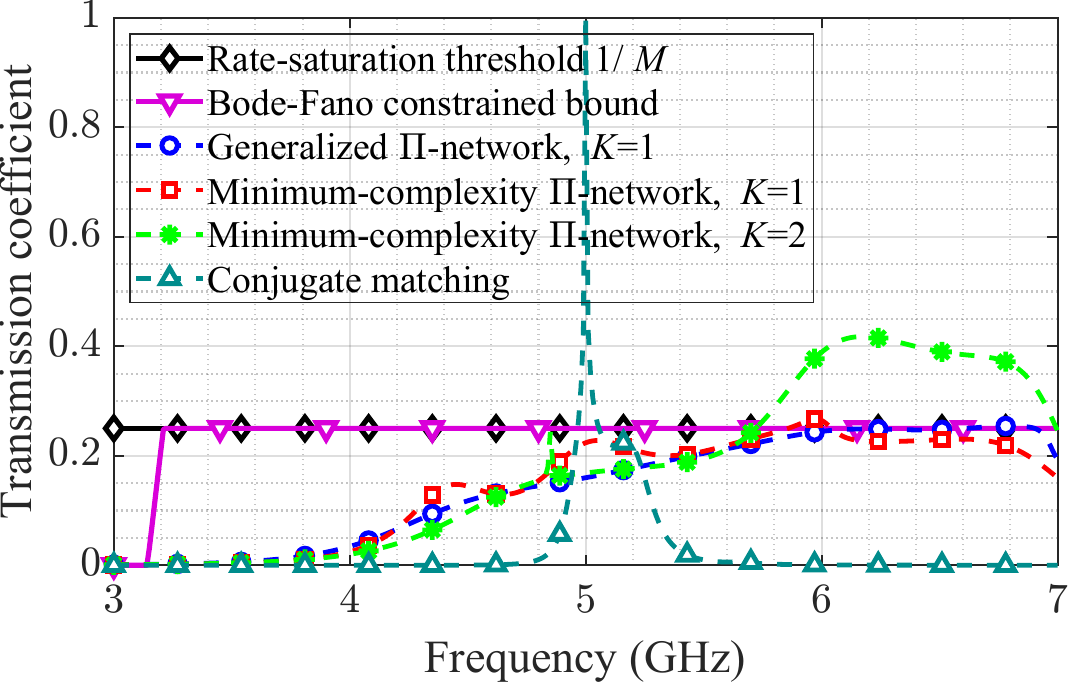}
        \label{fig:transmission_4GHz}
    }
    \hspace{0.0001\textwidth}
    \subfloat[$B=6$ GHz]{
        \includegraphics[width=0.315\textwidth]{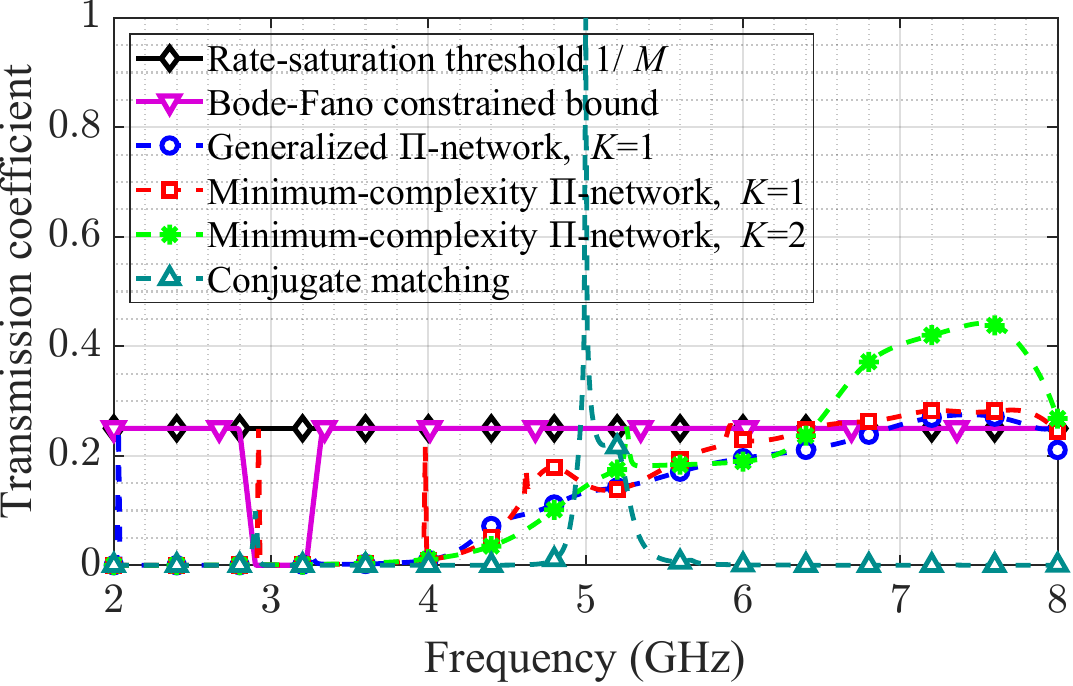}
        \label{fig:transmission_6GHz}
    }

    \vspace{0mm}

    % ===================== Row 2: Alignment coefficient =====================
    \subfloat[$B=2$ GHz]{
        \includegraphics[width=0.315\textwidth]{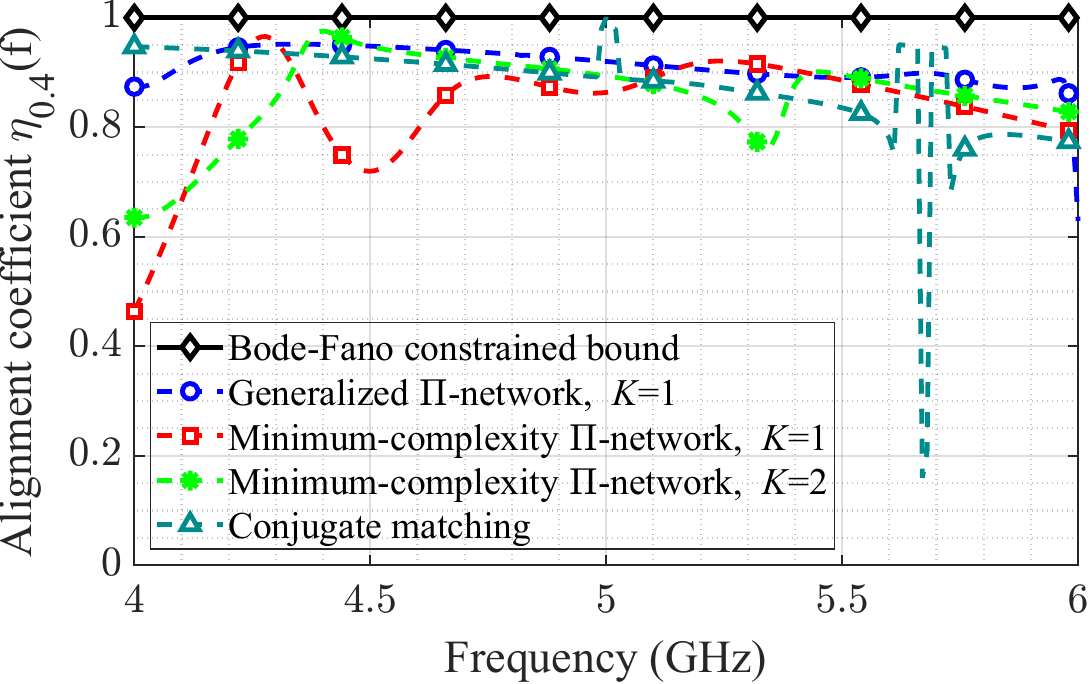}
        \label{fig:alignment_2GHz}
    }
    \hspace{0.0001\textwidth}
    \subfloat[$B=4$ GHz]{
        \includegraphics[width=0.315\textwidth]{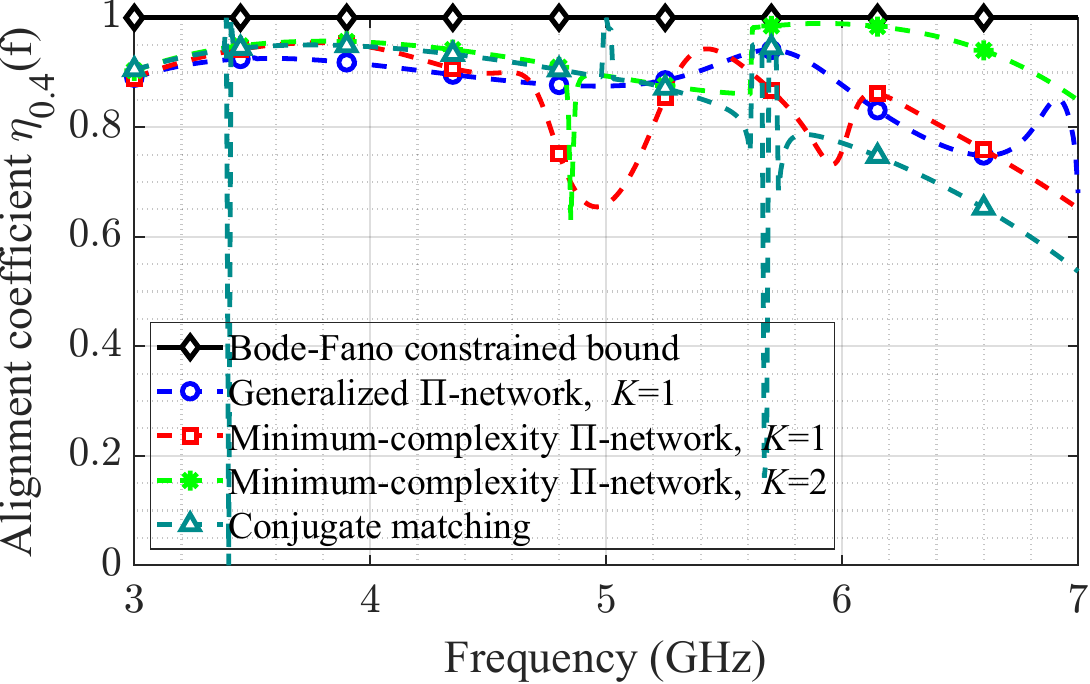}
        \label{fig:alignment_4GHz}
    }
    \hspace{0.0001\textwidth}
    \subfloat[$B=6$ GHz]{
        \includegraphics[width=0.315\textwidth]{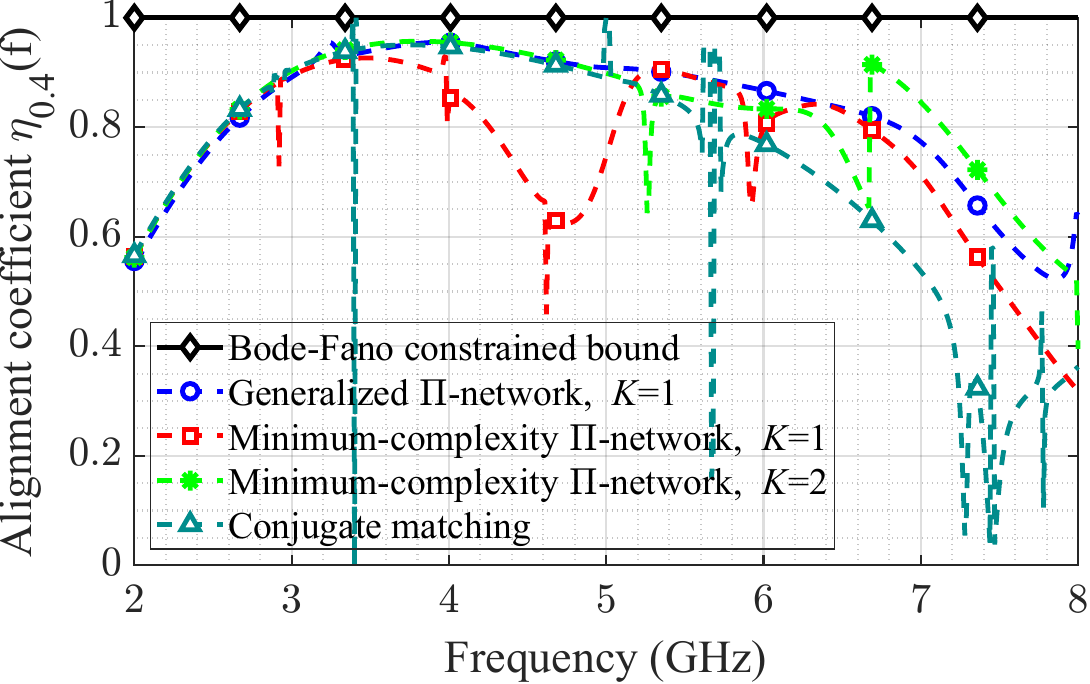}
        \label{fig:alignment_6GHz}
    }

   \caption{Frequency-dependent transmission and alignment characteristics under different bandwidths. The top row compares the Bode--Fano-constrained optimal transmission coefficient $\mathcal{T}^{\star}(f)$ with those achieved by the optimized circuit realizations, while the bottom row shows the alignment coefficient $\eta_{0.4}(f)$ of the circuit realizations. The columns correspond to $B=2$, $4$, and $6$~GHz.}
    \label{fig:transmission_alignment}
\end{figure*}

Fig.~\ref{rate_compare} compares the achievable rates of the considered MISO system over different bandwidths. The ideal-matching case assumes perfect matching throughout the operating band and therefore serves as an unattainable upper bound. When the limitations imposed by passive wideband matching are taken into account, the Bode--Fano constrained bound becomes lower, and the gap widens as the bandwidth increases. This trend indicates that the available wideband matching capability of MMN becomes increasingly insufficient to sustain the transmission coefficients required to attain the ideal equivalent channel gain across the entire band. The center-frequency conjugate matching network only improves the matching performance around \(f_{\mathrm c}\), and its broadband achievable rate therefore remains close to that of the unmatched case, for which the equivalent channel reduces to the propagation vector
\(\mathbf s_{\mathrm{RT}}(f)\). By contrast, the multiport circuits optimized directly for achievable rate significantly outperform both benchmarks, demonstrating the effectiveness of the proposed rate-oriented MMN realization method.

Among the three practical circuit implementations, the generalized \(\Pi\)-network with \(K=1\) achieves the highest rate. Its richer inter-port connectivity enables more flexible control of both the power transmission capability and the dominant transmission subspace. For the minimum-complexity \(\Pi\)-network, increasing the Foster second-form order from \(K=1\) to \(K=2\) enhances the frequency-response shaping capability of the branch admittances, resulting in a modest rate improvement. These results show that the achievable rate depends on both inter-port connectivity and Foster second-form order, which jointly govern the flexibility of the multiport transfer response.

\begin{figure*}[!t]
    \centering
    \subfloat[$B$ = 2 GHz]{
        \includegraphics[width=0.315\textwidth]{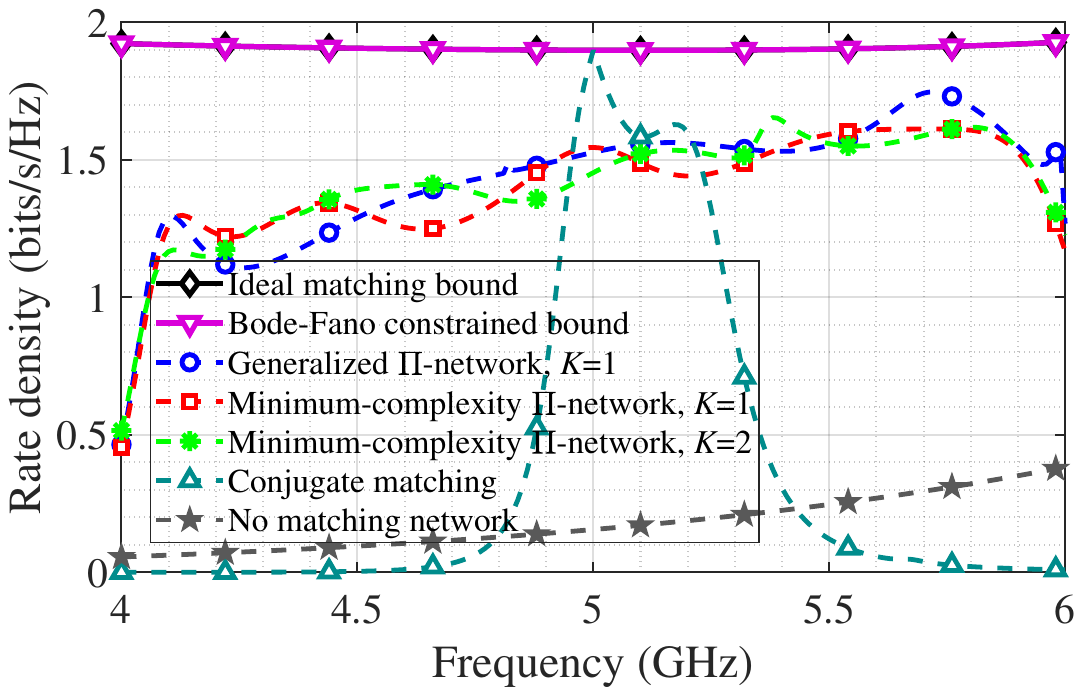}
        \label{fig:rate_density_2GHz}
    }
    \hspace{0.0001\textwidth}
    \subfloat[$B$ = 4 GHz]{
        \includegraphics[width=0.315\textwidth]{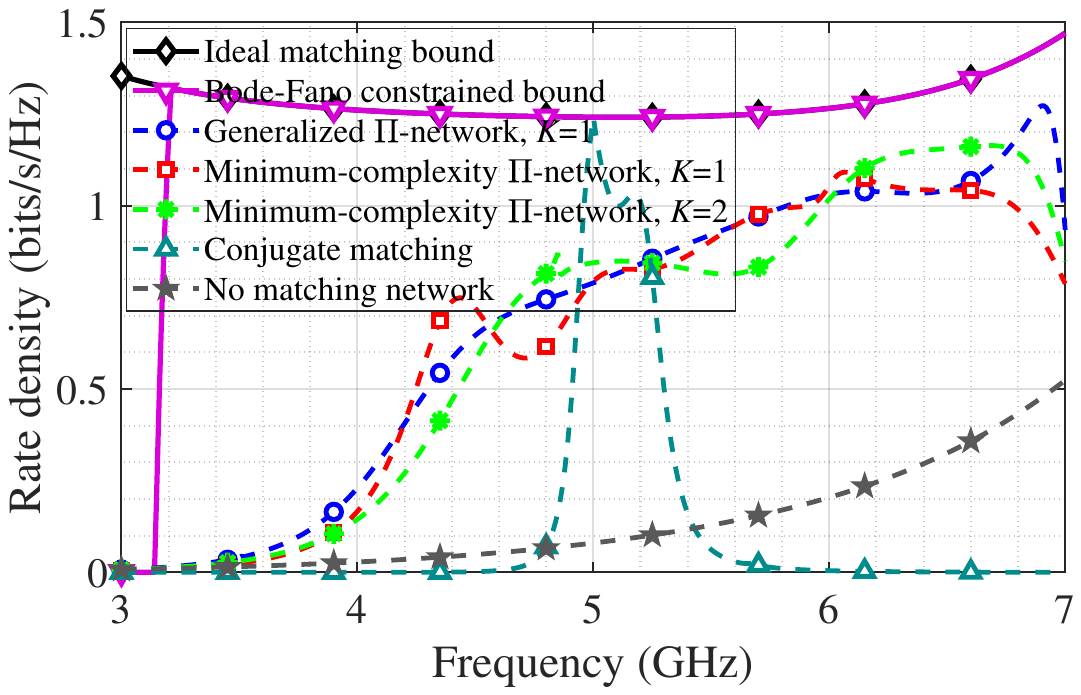}
        \label{fig:rate_density_4GHz}
    }
    \hspace{0.0001\textwidth}
    \subfloat[$B$ = 6 GHz]{
        \includegraphics[width=0.315\textwidth]{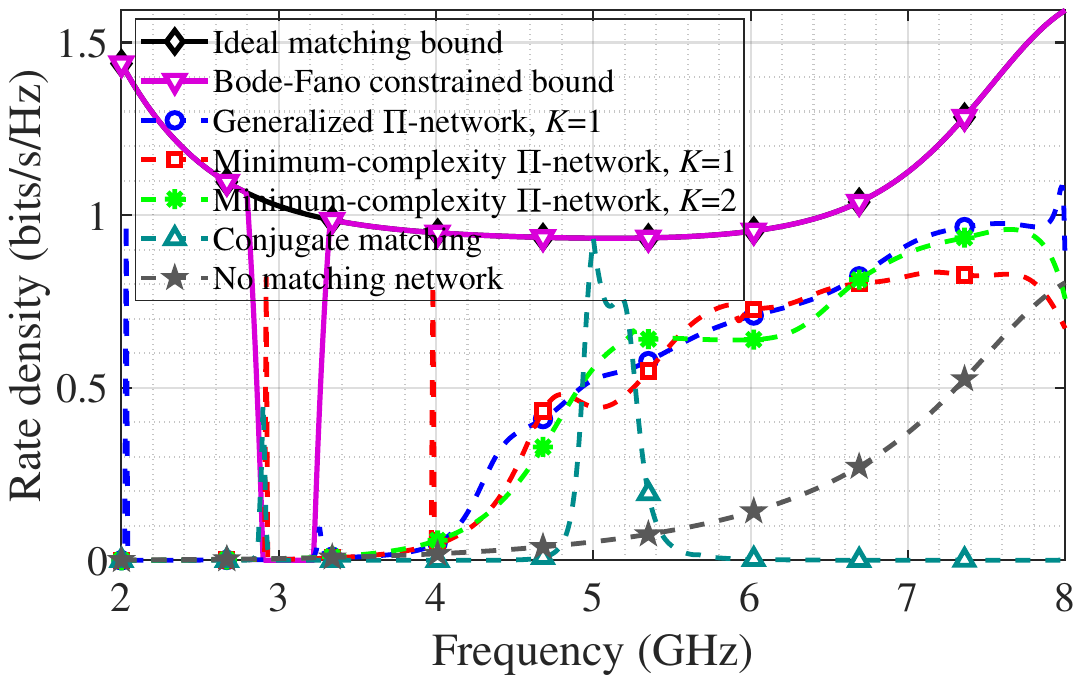}
        \label{fig:rate_density_6GHz}
    }

    \caption{Comparison of rate density among the ideal-matching bound, the Bode--Fano-constrained bound, and different matching-network implementations under different bandwidths: (a) \(B=2\) GHz, (b) \(B=4\) GHz, and (c) \(B=6\) GHz.}
    \label{fig:frequency_dependent_performance}
\end{figure*}

To further examine whether the optimized circuit responses follow the Bode--Fano constrained theoretical optimum, Fig.~\ref{fig:transmission_alignment} shows the frequency-dependent transmission coefficient \(\mathcal T(f)\) and the alignment coefficient \(\eta_{0.4}(f)\) for \(B=2\), \(4\), and \(6\)~GHz. As discussed in \textit{Remark}~\ref{remark3}, the theoretical optimum allocates the available transmission capability across frequency by balancing the achievable-rate benefit and the Bode--Fano matching cost. Moreover, as established in \textit{Remark}~\ref{remark2}, \(\mathcal T(f)=1/M\) is sufficient to attain the ideal equivalent channel gain when the power-normalized propagation vector is aligned with a unit-gain input direction of \(\mathbf F(f)\). For the considered four-element array, the corresponding saturation threshold is \(0.25\). This principle is clearly reflected in the theoretical transmission-coefficient curves. Since the weighting functions in Fig.~\ref{BF_coefficient} indicate a high matching cost around \(3\)~GHz, the theoretical optimum avoids allocating excessive transmission capability to this region when the bandwidth is increased to \(4\) or \(6\)~GHz. Consequently, \(\mathcal T^\star(f)\) exhibits a pronounced dip around these high-cost frequencies while remaining close to the saturation threshold over the lower-cost portions of the operating band.

The optimized circuit responses exhibit the same tendency, although with coarser frequency selectivity. As shown in Figs.~\ref{fig:transmission_2GHz}--\ref{fig:transmission_6GHz}, especially for larger bandwidths, the practical circuits also suppress \(\mathcal T(f)\) in the low-frequency region with a high Bode--Fano cost. Because finite-order Foster second-form circuits have limited response-shaping flexibility, they cannot exactly reproduce the sharp and localized transmission allocation of \(\mathcal T^\star(f)\). In the lower-cost frequency regions, the optimized circuits do not pursue \(\mathcal T(f)=1\). Instead, their transmission coefficients generally remain near the saturation threshold of \(0.25\). Meanwhile, Figs.~\ref{fig:alignment_2GHz}--\ref{fig:alignment_6GHz} show that the alignment coefficient remains relatively high over most of the operating band. Overall, the consistency between the theoretical optimum and the circuit-optimized responses supports the practical relevance of the Bode--Fano-constrained rate characterization. The remaining performance gap can be attributed primarily to the limited response-shaping flexibility of the finite-order Foster second-form realizations. This observation suggests that higher-order realizations or more flexible multiport topologies could enable closer approximation of the theoretical optimum,
particularly over wider bandwidths.

Finally, Fig.~\ref{fig:frequency_dependent_performance} explains why the achievable rates of the practical circuits fail to increase with bandwidth. As the bandwidth expands, the newly included low-frequency region contributes little rate density. Meanwhile, spreading the fixed total transmit power over a wider band reduces the transmit power spectral density and, consequently, the rate density in the frequency region that previously contributed most to the achievable rate. These two effects jointly prevent the achievable rate from increasing with bandwidth and even cause a slight decrease.

\section{Conclusion}\label{section_7}
In this paper, we have investigated the achievable-rate upper bound of MISO systems with transmit-side MMNs under multiport Bode--Fano constraints and developed a rate-oriented circuit realization method. Using the concept of directional gain from multivariable control theory, we have first revealed how multiport power transmission and propagation are coupled in determining the equivalent channel gain. Based on this characterization, we have then derived a tight upper bound on the equivalent channel gain in terms of the transmission coefficient, enabling the achievable-rate upper-bound problem to be reformulated as a scalar transmission-allocation problem. The resulting solution has revealed that perfect impedance matching is unnecessary for attaining the ideal-matching SNR and that limited matching resources should be allocated across frequency according to rate benefit and matching cost. Finally, we have developed a multi-start gradient-based method to optimize passive multiport circuit parameters under prescribed topologies. Numerical results have demonstrated the practical value of the theoretical analysis for communication-oriented MMN design and the effectiveness of the
proposed circuit realization method.

\appendices

\section{Proof of Proposition \ref{proposition1}}\label{appendix_A}
For notational brevity, the frequency argument $f$ is omitted. The source-side input reflection matrix is given by
\begin{equation}
\mathbf S_{\mathrm{in}}
=
\mathbf S_{\mathrm{M},11}
+
\mathbf S_{\mathrm{M},12}\mathbf S_{\mathrm T}
\left(
\mathbf I_M-\mathbf S_{\mathrm{M},22}\mathbf S_{\mathrm T}
\right)^{-1}
\mathbf S_{\mathrm{M},21},
\end{equation}
such that $\mathbf b_{\mathrm T}=\mathbf S_{\mathrm{in}}\mathbf a_{\mathrm T}$. Since the MMN is lossless, the net input power at the source side equals the net power delivered to the antenna array: $\|\mathbf a_{\mathrm T}\|^2-\|\mathbf b_{\mathrm T}\|^2
= \|\mathbf a_{\mathrm A}\|^2-\|\mathbf b_{\mathrm A}\|^2$. Using $\mathbf b_{\mathrm T}=\mathbf S_{\mathrm{in}}\mathbf a_{\mathrm T}$ and the definition of $\mathbf F$, we obtain $\mathbf a_{\mathrm T}^H \left(\mathbf I_M-\mathbf S_{\mathrm{in}}^H\mathbf S_{\mathrm{in}}\right) \mathbf a_{\mathrm T}
=\mathbf a_{\mathrm T}^H\mathbf F\mathbf F^H\mathbf a_{\mathrm T}$, for arbitrary $\mathbf{a}_{\mathrm{T}}$.
Hence,
\begin{equation}\label{eq:FF_SinSin_appendix}
\mathbf F\mathbf F^H
=
\mathbf I_{M}-\mathbf S_{\mathrm{in}}^H\mathbf S_{\mathrm{in}}.
\end{equation}
Since $\mathbf S_{\mathrm{in}}^H\mathbf S_{\mathrm{in}}\succeq \mathbf 0$, we
have $\mathbf 0\preceq \mathbf F\mathbf F^H\preceq \mathbf I_{M}$. Equivalently, $\mathbf 0\preceq\mathbf F^H\mathbf F\preceq\mathbf I_M$. Thus, all
singular values of $\mathbf F$ are no greater than one, and $\mathbf F$ is a
contraction under the Euclidean norm.

It remains to prove the equality condition. If $\|\mathbf F\mathbf x\|=\|\mathbf x\|,\, \forall\,\mathbf x$, then $\mathbf F^H\mathbf F=\mathbf I_{M}$. Since
$\mathbf F$ is square, this implies that
$\mathbf F\mathbf F^H=\mathbf I_{M}$. Comparing this result with
\eqref{eq:FF_SinSin_appendix} yields
$\mathbf S_{\mathrm{in}}^H\mathbf S_{\mathrm{in}}=\mathbf 0$, and therefore
$\mathbf S_{\mathrm{in}}=\mathbf 0$. Conversely, if $\mathbf S_{\mathrm{in}}=\mathbf 0$, then
\eqref{eq:FF_SinSin_appendix} gives $\mathbf F\mathbf F^H=\mathbf I_{M}$. Since
$\mathbf F$ is square, we also have $\mathbf F^H\mathbf F=\mathbf I_{M}$, and
therefore $\|\mathbf F\mathbf x\|=\|\mathbf x\|, \, \forall \mathbf x$. This proves
the equivalence. This completes the proof.

\section{Proof of Lemma \ref{lemma1}}\label{appendix_B}
For any nonzero \(\mathbf u\in\mathbb C^M\), according to the Rayleigh–Ritz theorem, we have
\begin{equation}
    \|\mathbf A\mathbf u\|^2
=
\mathbf u^H\mathbf A^H\mathbf A\mathbf u
\le
\lambda_{\max}(\mathbf A^H\mathbf A)\|\mathbf u\|^2 ,
\end{equation}
where $\lambda_{\max}(\cdot)$ denotes the largest eigenvalue. Since \(\mathbf A^H\mathbf A\succeq\mathbf 0\) and \(\tau=\frac{\operatorname{tr}(\mathbf{A}^{H}\mathbf{A})}{M}\), its largest eigenvalue satisfies
\begin{equation}\label{eq:AHA_bound1}
    \lambda_{\max}(\mathbf A^H\mathbf A)
\le
\operatorname{tr}(\mathbf A^H\mathbf A)
=
M \tau.
\end{equation}
Since \(\mathbf A\mathbf A^H\) and \(\mathbf A^H\mathbf A\) have the same nonzero eigenvalues, 
\(\mathbf A\mathbf A^H\preceq \mathbf I_{M}\) implies
\(\mathbf A^H \mathbf A\preceq \mathbf I_{M}\). Then we have
\begin{equation}\label{eq:AHA_bound2}
    \lambda_{\max}(\mathbf A^H \mathbf A)\le 1.
\end{equation}
According to (\ref{eq:AHA_bound1}) and (\ref{eq:AHA_bound2}), we can obtain
\begin{equation}
    \lambda_{\max}(\mathbf A^H \mathbf A)
\le
\min\{M\tau,1\},
\end{equation}
which gives
\begin{equation}
    \|\mathbf A\mathbf u\|^2
\le
\|\mathbf u\|^2
\min\{M\tau,1\}.
\end{equation}

It remains to show that the bound is tight. 
When \(0\le \tau\le \frac{1}{M}\), consider
\begin{equation}
\mathbf A
=
\sqrt{M\tau}\,
\mathbf q
\frac{\mathbf u^H}{\|\mathbf u\|},    
\end{equation}
where \(\mathbf q\in\mathbb C^M\) is an arbitrary unit-norm vector. Then,
\begin{equation}
\mathbf A\mathbf u
=
\sqrt{M\tau}\,
\mathbf q
\frac{\mathbf u^H\mathbf u}{\|\mathbf u\|}
=
\sqrt{M\tau}\,\|\mathbf u\|\mathbf q,    
\end{equation}
which yields
\begin{equation}
\|\mathbf A\mathbf u\|^2
=
M\tau\|\mathbf u\|^2
=
\|\mathbf u\|^2\min\{M\tau,1\}.    
\end{equation}
Thus, equality is attained for \(0\le \tau \le \frac{1}{M}\). Moreover,
\begin{equation}
\mathbf A\mathbf A^H
=
M\tau\,\mathbf q\mathbf q^H
\preceq \mathbf I_M,    
\end{equation}
so the constructed matrix $\mathbf A$ is feasible.

When \(\frac{1}{M}<\tau<1\), choose an integer \(p\) such that
\begin{equation}
1\le p\le \lfloor M\tau \rfloor ,    
\end{equation}
where $\lfloor x\rfloor$ denotes the greatest integer not exceeding $x$. Consider the singular value decomposition
\begin{equation}
\mathbf A
=
\mathbf U
\operatorname{diag}(\sigma_1,\ldots,\sigma_M)
\mathbf V^H,    
\end{equation}
where \(\mathbf U\) is an arbitrary unitary matrix and
\(\mathbf V=[\mathbf v_1,\ldots,\mathbf v_M]\) is a unitary matrix to be specified.
Let $\mathbf V_1=[\mathbf v_1,\ldots,\mathbf v_p]$.
Choose \(\mathbf V\) such that
\begin{equation}\label{eq:proj}
    \mathbf u=\mathbf V_1\mathbf c,
\end{equation}
where \(\mathbf c\in\mathbb C^p\) has no zero entries. Thus, \(\mathbf u\) lies in the column space of \(\mathbf V_1\), and each column of \(\mathbf V_1\) contributes to the representation of \(\mathbf u\).

The singular values are chosen to satisfy
\begin{subequations}\label{eq:singularity}
\begin{align}
    & \sigma_1=\cdots=\sigma_p=1, \\
    & 0\le\sigma_m<1, \quad m=p+1,\ldots,M, \\
    & p+\sum_{m=p+1}^{M}\sigma_m^2=M\tau .
\end{align}
\end{subequations}
Such a choice is feasible because \(p\le M\tau<M\). According to (\ref{eq:singularity}), all singular values of \(\mathbf A\) are no larger than one, and hence $\mathbf 0\preceq \mathbf A\mathbf A^H\preceq \mathbf I_{M}$. Since \(\mathbf V=[\mathbf V_1,\mathbf V_2]\) is unitary and
\(\mathbf u=\mathbf V_1\mathbf c\), we have $\mathbf V^H\mathbf u =[\mathbf c, \mathbf 0]^{T}$. Because the first \(p\) singular values are equal to one, it follows that
\begin{equation}
\operatorname{diag}(\sigma_1,\ldots,\sigma_M)\mathbf V^H\mathbf u
=
\mathbf V^H\mathbf u .    
\end{equation}
Therefore, using the unitary invariance of the Euclidean norm,
\begin{equation}
\|\mathbf A\mathbf u\|
=
\left\|
\mathbf U
\operatorname{diag}(\sigma_1,\ldots,\sigma_M)
\mathbf V^H\mathbf u
\right\|
=
\|\mathbf V^H\mathbf u\|
=
\|\mathbf u\|.    
\end{equation}
Thus,
\begin{equation}
\|\mathbf A\mathbf u\|^2
=
\|\mathbf u\|^2
=
\|\mathbf u\|^2\min\{M\tau,1\}.    
\end{equation}

For the boundary case \(\tau=1\), equality is attained by choosing $\sigma_1=\cdots=\sigma_M=1$. Thus, equality is also attained for \(\frac{1}{M}<\tau\le 1\). This completes the proof.

\section{Analytical Gradients for Circuit-Parameter Optimization}
\label{appendix_gradient}
Let \(\vartheta_{ij}\) denote an arbitrary scalar entry of 
\(\boldsymbol{\theta}_{ij}\), i.e., one optimization variable associated with branch \((i,j)\).  Unless otherwise specified, the explicit dependences on \(f\) and \(\boldsymbol{\theta}\) are omitted in the following derivative expressions for notational simplicity.

Starting from the achievable-rate expression in (\ref{eq:MISO_rate}), the derivative of \(R_{\mathrm{MISO}}(\boldsymbol{\theta})\) with respect to \(\vartheta_{ij}\) can be written as
\begin{equation}
\frac{\partial R_{\mathrm{MISO}}}
{\partial \vartheta_{ij}}
=
\int_{f_{\min}}^{f_{\max}}
\chi(f;\boldsymbol{\theta})
\operatorname{Re}
\left\{
\mathbf h_{\mathrm{MISO}}^{H}
\frac{
\partial \mathbf h_{\mathrm{MISO}}
}{
\partial \vartheta_{ij}
}
\right\}
df,
\end{equation}
where
\begin{equation}
\chi(f;\boldsymbol{\theta})
=
\frac{
2P_{\mathrm{T}}(f)
}{
\ln 2
\left[
N_{0}+P_{\mathrm{T}}(f)
\left\|
\mathbf h_{\mathrm{MISO}}(f;\boldsymbol{\theta})
\right\|^2
\right]
}.
\end{equation}
Therefore, it remains to evaluate
\(\partial \mathbf h_{\mathrm{MISO}} / \partial \vartheta_{ij}\). Differentiating the equivalent channel expression in (\ref{eq:hMISO}) gives
\begin{equation}
\begin{aligned}
\frac{\partial \mathbf h_{\mathrm{MISO}}}
{\partial \vartheta_{ij}}
=&
\left(
\frac{\partial \mathbf S_{\mathrm{M},21}}
{\partial \vartheta_{ij}}
\right)^H
\mathbf D^{-H}
\mathbf s_{\mathrm{RT}}^{*}
\\
&+
\mathbf S_{\mathrm{M},21}^{H}
\mathbf D^{-H}
\mathbf S_{\mathrm T}^{H}
\left(
\frac{\partial \mathbf S_{\mathrm{M},22}}
{\partial \vartheta_{ij}}
\right)^H
\mathbf D^{-H}
\mathbf s_{\mathrm{RT}}^{*},
\end{aligned}
\end{equation}
where
\begin{equation}
\mathbf D(f;\boldsymbol{\theta})
=
\mathbf I_{M}
-
\mathbf S_{\mathrm{M},22}(f;\boldsymbol{\theta})
\mathbf S_{\mathrm T}(f).
\end{equation}
Thus, the derivative of the equivalent channel is determined by the derivatives of the two scattering-parameter blocks
\(\mathbf S_{\mathrm{M},21}\) and \(\mathbf S_{\mathrm{M},22}\). Differentiating the \(Y\)-to-\(S\) relation in (\ref{eq:Y_to_S}) with respect to \(\vartheta_{ij}\) yields
\begin{equation}
\frac{\partial \mathbf S_{\mathrm M}}
{\partial \vartheta_{ij}}
=
-2Z_0
\left(
\mathbf I_{2M}
+
Z_0\mathbf Y
\right)^{-1}
\frac{\partial \mathbf Y}
{\partial \vartheta_{ij}}
\left(
\mathbf I_{2M}
+
Z_0\mathbf Y
\right)^{-1}.
\end{equation}
The required blocks
\(\partial \mathbf S_{\mathrm{M},21}/\partial \vartheta_{ij}\)
and
\(\partial \mathbf S_{\mathrm{M},22}/\partial \vartheta_{ij}\)
are obtained by extracting the corresponding blocks from \(\partial \mathbf S_{\mathrm M}/\partial \vartheta_{ij}\). From the admittance-matrix assembly rule in \eqref{eq:Y_stamping}, only the admittance of branch \((i,j)\) depends on the variable \(\vartheta_{ij}\). Hence,
\begin{equation}
\frac{\partial \mathbf Y}
{\partial \vartheta_{ij}}
=
\frac{\partial c_{ij}}
{\partial \vartheta_{ij}}
\mathbf E_{ij}.
\end{equation}
The remaining task is therefore reduced to the local derivative of the Foster second-form branch admittance with respect to each scalar entry of \(\boldsymbol{\theta}_{ij}\). From \eqref{eq:foster_branch_admittance}, the local derivatives are given by
\begin{equation}
\frac{\partial c_{ij}}
{\partial \theta_{ij,\infty}^{C}}
=
\mathrm{j}2\pi f C_{\infty,ij},
\end{equation}
\begin{equation}
\frac{\partial c_{ij}}
{\partial \theta_{ij,0}^{L}}
=
-\frac{1}{\mathrm{j}2\pi f L_{0,ij}},
\end{equation}
and, for \(m=1,\ldots,K\),
\begin{equation}
\frac{\partial c_{ij}}
{\partial \theta_{ij,m}^{L}}
=
-\frac{\mathrm{j}2\pi f L_{m,ij}}
{\left( \mathrm{j}2\pi f L_{m,ij}
+
\frac{1}{\mathrm{j}2\pi f C_{m,ij}}\right)^2},
\end{equation}
\begin{equation}
\frac{\partial c_{ij}}
{\partial \theta_{ij,m}^{C}}
=
\frac{1}
{\mathrm{j}2\pi f C_{m,ij}\left( \mathrm{j}2\pi f L_{m,ij}
+
\frac{1}{\mathrm{j}2\pi f C_{m,ij}}\right)^2}.
\end{equation}

These expressions complete the analytical gradient of \(R_{\mathrm{MISO}}(\boldsymbol{\theta})\) with respect to all optimization variables. Since the optimizer minimizes \(-R_{\mathrm{MISO}}(\boldsymbol{\theta})\), the supplied gradient is the negative of the rate gradient derived above.

\bibliographystyle{IEEEtran}
\bibliography{main.bib}

% Generated by IEEEtran.bst, version: 1.14 (2015/08/26)
\begin{thebibliography}{10}
\providecommand{\url}[1]{#1}
\csname url@samestyle\endcsname
\providecommand{\newblock}{\relax}
\providecommand{\bibinfo}[2]{#2}
\providecommand{\BIBentrySTDinterwordspacing}{\spaceskip=0pt\relax}
\providecommand{\BIBentryALTinterwordstretchfactor}{4}
\providecommand{\BIBentryALTinterwordspacing}{\spaceskip=\fontdimen2\font plus
\BIBentryALTinterwordstretchfactor\fontdimen3\font minus \fontdimen4\font\relax}
\providecommand{\BIBforeignlanguage}[2]{{%
\expandafter\ifx\csname l@#1\endcsname\relax
\typeout{** WARNING: IEEEtran.bst: No hyphenation pattern has been}%
\typeout{** loaded for the language `#1'. Using the pattern for}%
\typeout{** the default language instead.}%
\else
\language=\csname l@#1\endcsname
\fi
#2}}
\providecommand{\BIBdecl}{\relax}
\BIBdecl

\bibitem{cui2023absorptive}
J.~Cui, L.~Zhu, and H.~Deng, ``Absorptive filtering antennas with predictable flat gain and wideband impedance matching,'' \emph{IEEE Trans. Antennas Propag.}, vol.~72, no.~3, pp. 2382--2390, Mar. 2024.

\bibitem{cheng2024multimode}
S.~M. Cheng and D.~Psychogiou, ``Multimode low noise amplifier filter with tunable transfer function characteristics,'' \emph{IEEE Trans. Microw. Theory Techn.}, vol.~73, no.~3, pp. 1487--1501, 2025.

\bibitem{soltani2024wide}
M.~Soltani and G.~V. Eleftheriades, ``Wide-angle impedance matching of a patch antenna phased array using artificial dielectric sheets,'' \emph{IEEE Trans. Antennas Propag.}, vol.~72, no.~5, pp. 4258--4270, May. 2024.

\bibitem{lau2006impact}
B.~K. Lau, J.~B. Andersen, G.~Kristensson, and A.~F. Molisch, ``Impact of matching network on bandwidth of compact antenna arrays,'' \emph{IEEE Trans. Antennas Propag.}, vol.~54, no.~11, pp. 3225--3238, Nov. 2006.

\bibitem{abdallah2016maximum}
M.~N. Abdallah, T.~K. Sarkar, and M.~Salazar-Palma, ``Maximum power transfer versus efficiency,'' in \emph{Proc. IEEE Int. Symp. Antennas Propag. (APSURSI)}, Jun. 2016, pp. 183--184.

\bibitem{putra2015impedance}
A.~P. Putra and A.~Munir, ``{Impedance matching circuit for 30--88MHz monopole antenna with automatic matching},'' in \emph{Proc. 1st Int. Conf. Wireless Telematics (ICWT)}, Nov. 2015, pp. 1--4.

\bibitem{choi2020adaptive}
J.~Choi, D.~Choi, J.~Lee, W.~Hwang, and W.~Hong, ``{Adaptive 5G architecture for an mmWave antenna front-end package consisting of tunable matching network and surface-mount technology},'' \emph{IEEE Trans. Compon., Packag., Manuf. Technol.}, vol.~10, no.~12, pp. 2037--2046, Dec. 2020.

\bibitem{rahola2011optimization}
J.~Rahola, ``Optimization of matching circuits for antennas,'' in \emph{Proc. 5th Eur. Conf. Antennas Propag. (EUCAP)}, Apr. 2011, pp. 776--778.

\bibitem{desoer1973maximum}
C.~Desoer, ``The maximum power transfer theorem for n-ports,'' \emph{IEEE Trans. Circuit Theory}, vol.~20, no.~3, pp. 328--330, May. 1973.

\bibitem{li2021decoupling}
M.~Li \emph{et~al.}, ``{Decoupling and matching network for dual-band MIMO antennas},'' \emph{IEEE Trans. Antennas Propag.}, vol.~70, no.~3, pp. 1764--1775, Mar. 2022.

\bibitem{broyde2014some}
F.~Broyd{\'e} and E.~Clavelier, ``Some properties of multiple-antenna-port and multiple-user-port antenna tuners,'' \emph{IEEE Trans. Circuits Syst. I, Reg. Papers}, vol.~62, no.~2, pp. 423--432, Feb. 2015.

\bibitem{nie2014systematic}
D.~Nie, B.~M. Hochwald, and E.~Stauffer, ``Systematic design of large-scale multiport decoupling networks,'' \emph{IEEE Trans. Circuits Syst. I, Reg. Papers}, vol.~61, no.~7, pp. 2172--2181, Mar. 2014.

\bibitem{shen2015impedance}
S.~Shen and R.~D. Murch, ``{Impedance matching for compact multiple antenna systems in random RF fields},'' \emph{IEEE Trans. Antennas Propag.}, vol.~64, no.~2, pp. 820--825, Feb. 2016.

\bibitem{sarabandi2022bandwidth}
K.~Sarabandi and M.~Rao, ``{Bandwidth and SNR of small receiving antennas: To match or not to match},'' \emph{IEEE Trans. Antennas Propag.}, vol.~71, no.~1, pp. 99--104, Jun. 2023.

\bibitem{li2022study}
Z.~Li \emph{et~al.}, ``{A study on electromagnetic scattering characteristics of 4-D antenna arrays},'' \emph{IEEE Trans. Antennas Propag.}, vol.~71, no.~1, pp. 275--287, Jan. 2023.

\bibitem{ivrlavc2010toward}
M.~T. Ivrla{\v{c}} and J.~A. Nossek, ``Toward a circuit theory of communication,'' \emph{IEEE Trans. Circuits Syst. I, Reg. Papers}, vol.~57, no.~7, pp. 1663--1683, Jun. 2010.

\bibitem{wallace2004mutual}
J.~W. Wallace and M.~A. Jensen, ``{Mutual coupling in MIMO wireless systems: A rigorous network theory analysis},'' \emph{IEEE Trans. Wireless Commun.}, vol.~3, no.~4, pp. 1317--1325, Jun. 2004.

\bibitem{fei2008optimal}
Y.~Fei, Y.~Fan, B.~K. Lau, and J.~S. Thompson, ``{Optimal single-port matching impedance for capacity maximization in compact MIMO arrays},'' \emph{IEEE Trans. Antennas Propag.}, vol.~56, no.~11, pp. 3566--3575, Nov. 2008.

\bibitem{saab2021optimizing}
S.~Saab, A.~Mezghani, and R.~W. Heath, ``Optimizing the mutual information of frequency-selective multi-port antenna arrays in the presence of mutual coupling,'' \emph{IEEE Trans. Commun.}, vol.~70, no.~3, pp. 2072--2084, Mar. 2022.

\bibitem{kundu2017enhancing}
L.~Kundu and B.~L. Hughes, ``{Enhancing capacity in compact MIMO-OFDM systems with frequency-selective matching},'' \emph{IEEE Trans. Commun.}, vol.~65, no.~11, pp. 4694--4703, Nov. 2017.

\bibitem{fano1950theoretical}
R.~M. Fano, ``Theoretical limitations on the broadband matching of arbitrary impedances,'' \emph{J. Franklin Inst.}, vol. 249, no.~1, pp. 57--83, Jan. 1950.

\bibitem{deshpande2023achievable}
N.~V. Deshpande \emph{et~al.}, ``{Achievable rate of a SISO system under wideband matching network constraints},'' in \emph{Proc. IEEE Global Commun. Conf.}, Dec. 2023, pp. 7514--7519.

\bibitem{deshpande2024generalization}
N.~V. Deshpande \emph{et~al.}, ``{A generalization of the achievable rate of a MISO system using Bode-Fano wideband matching theory},'' \emph{IEEE Trans. Wireless. Commun.}, vol.~23, no.~10, pp. 13\,313--13\,329, Oct. 2024.

\bibitem{barneto2022beamformer}
C.~B. Barneto \emph{et~al.}, ``{Beamformer design and optimization for joint communication and full-duplex sensing at mm-Waves},'' \emph{IEEE Trans. Commun.}, vol.~70, no.~12, pp. 8298--8312, Dec. 2022.

\bibitem{nie2016bandwidth}
D.~Nie and B.~M. Hochwald, ``{Bandwidth analysis of multiport radio frequency systems—Part I},'' \emph{IEEE Trans. Antennas Propag.}, vol.~65, no.~3, pp. 1081--1092, Mar. 2017.

\bibitem{skogestad2005multivariable}
S.~Skogestad and I.~Postlethwaite, \emph{Multivariable Feedback Control: Analysis and Design}.\hskip 1em plus 0.5em minus 0.4em\relax Hoboken, NJ, USA: Wiley, 2007.

\bibitem{nie2016bandwidth2}
D.~Nie and B.~M. Hochwald, ``{Bandwidth analysis of multiport radio-frequency systems—Part II},'' \emph{IEEE Trans. Antennas Propag.}, vol.~65, no.~3, pp. 1093--1107, Mar. 2017.

\bibitem{rodriguez2021systematic}
R.~Rodriguez-Berral, F.~Mesa, and F.~Medina, ``Systematic obtaining of foster’s equivalent circuits for symmetric frequency selective surfaces,'' \emph{IEEE Trans. Antennas Propag.}, vol.~70, no.~2, pp. 1166--1177, Feb. 2022.

\bibitem{hao2022realizable}
L.~Hao and G.~Shi, ``{Realizable reduction of Multi-Port RCL networks by block elimination},'' \emph{IEEE Trans. Circuits Syst. I, Reg. Papers}, vol.~70, no.~1, pp. 399--412, Jan. 2023.

\bibitem{deschrijver2008macromodeling}
D.~Deschrijver, M.~Mrozowski, T.~Dhaene, and D.~De~Zutter, ``Macromodeling of multiport systems using a fast implementation of the vector fitting method,'' \emph{IEEE Microw. Wireless Compon. Lett.}, vol.~18, no.~6, pp. 383--385, Jun. 2008.

\bibitem{gustavsen2009fast}
B.~Gustavsen, ``{Fast passivity enforcement for S-parameter models by perturbation of residue matrix eigenvalues},'' \emph{IEEE Trans. Adv. Packag.}, vol.~33, no.~1, pp. 257--265, Feb. 2010.

\end{thebibliography}

\end{document}